\newcommand{\be}{\begin{equation}}
	\newcommand{\ee}{\end{equation}}
\newcommand{\bea}{\begin{eqnarray}}
	\newcommand{\eea}{\end{eqnarray}}
\newcommand{\bse}{\begin{subequations}}
	\newcommand{\ese}{\end{subequations}}
\theoremstyle{plain}
\newtheorem{prop}{Proposition}
\newtheorem{lem}{Lemma}
\newtheorem{rslt}{Result}
\newcommand{\prlsection}[1]{{\em {#1}.---~}}
\begin{document}
	
	\title{Approximating many-body quantum states with quantum circuits and measurements}
	
	\begin{abstract}
		We introduce protocols to prepare many-body quantum states with quantum circuits assisted by local operations and classical communication. We show that by lifting the requirement of exact preparation, one can substantially save resources. In particular, the so-called $W$ and, more generally, Dicke states require a circuit depth and number of ancillas per site that are independent of the system size. As a byproduct of our work, we introduce an efficient scheme to implement certain non-local, non-Clifford unitary operators. We also discuss how similar ideas may be applied in the preparation of eigenstates of well-known spin models, both free and interacting.
	\end{abstract}

	\author{Lorenzo \surname{Piroli}}
	\affiliation{Dipartimento di Fisica e Astronomia, Università di Bologna and INFN, Sezione di Bologna, via Irnerio 46, I-40126 Bologna, Italy}
	
	\author{Georgios \surname{Styliaris}}
	\affiliation{Max-Planck-Institut f{\"{u}}r Quantenoptik, Hans-Kopfermann-Str. 1, 85748 Garching, Germany}
	\affiliation{Munich Center for Quantum Science and Technology (MCQST), Schellingstr. 4, D-80799 München, Germany}

	\author{J.~Ignacio \surname{Cirac}}
	\affiliation{Max-Planck-Institut f{\"{u}}r Quantenoptik, Hans-Kopfermann-Str. 1, 85748 Garching, Germany}
	\affiliation{Munich Center for Quantum Science and Technology (MCQST), Schellingstr. 4, D-80799 München, Germany}

	\maketitle
	
	
	\prlsection{Introduction} The preparation of many-body quantum states plays a pivotal role in quantum simulation~\cite{cirac2012goals}. On the one hand, some of those states are required to exploit the field of quantum sensing~\cite{degen2017quantum}, quantum communication~\cite{gisin2007quantum}, or play a crucial role in quantum information theory~\cite{horodecki2009quantum}. On the other, they allow to investigate quantum many-body systems, extracting properties that otherwise are difficult to compute. Furthermore, some of them can be useful to initialize quantum algorithms that prepare ground states~\cite{georgescu2014quantum,mcArdle2020quantum,bauer2020quantum} or thermal states~\cite{chowdhury2017quantum,lu2021algorithms,temme2011quantum,motta2020determining,cohn2020minimal}.
	
	As current noisy intermediate-scale quantum (NISQ) devices~\cite{preskill2018quantum} are limited in the number of qubits and the coherence time, it is very important to devise efficient preparation schemes making use of the minimum amount of resources. Following early ideas~\cite{briegel2001persistent,raussendorf2005long}, an emerging theme is that preparation protocols using unitary circuits can be improved by making use of additional ancillas, measurements, and feedforward operations, notably in the context of topological order~\cite{piroli2021quantum,verresen2021efficiently,lu2022measurement,bravyi2022adaptive,lee2022decoding,tantivasadakarn2023hierarchy,lootens2023low,tantivasadakarn2023shortest,zhu2023nishimori,smith2023deterministic,sukeno2023quantum,gunn2023phases,malz2024preparation,okuda2024anomaly}. These ingredients are very natural from the point of view of quantum information, where they are called local operations and classical communication (LOCC)~\cite{horodecki2009quantum}.
	
	\begin{table}[t]
		\begin{ruledtabular}
			\begin{tabular}{c |c  c   c c } 
				& Ref.   &$D$    & $N_a$    & $N_r$ \\
				\hline
				\multirow{4}{2.5em}{$W$}
				&  Result~\ref{thm_dicke}  & $O(\ln \ln 1/\varepsilon)$  & 1 & $O(1)$ \\
				&   Result~\ref{thm_dicke}  & $O(1)$  & $O(\ln \ln 1/\varepsilon)$ & $O(1)$ \\ 
				& Result~\ref{thm_W}  & $O(1)$  & $1$ & O($1/\sqrt{\varepsilon}$) \\ 
				&  Ref.~\cite{buhrman2023state}  & $O(1)$ & $O(\ln N)$ & $1$ \\ \hline
				\multirow{4}{2.5em}{Dicke} & Result~\ref{thm_dicke}&  $O(1)$ & $ O(\ell_{M,\varepsilon})$ & $O(\sqrt{M})$    \\
				& Result~\ref{thm_dicke}&  $O(\ell_{M,\varepsilon})$ & $ 1$ & $O(\sqrt{M})$    \\
				&  Result~\ref{thm_dicke_improved} &   $O(M^{1/4} \ell^2_{M,\varepsilon} )$ & $1 + \ell_{M,\varepsilon}/N$   & $1$  \\
				& Ref.~\cite{buhrman2023state} &  $O(1)$ & $O(N \ln N)$ & $1$
			\end{tabular}
			\caption{Summary of our results and comparison with previous work [$M$: number of excitations; $\varepsilon$: infidelity]. The resources are the depth $D$ (including LOCC, if applicable) the number of ancillas per site $N_a$ and of repetitions $N_r$. A trade-off is possible in some cases, and we give variants optimizing either $D$, $N_a$, or $N_r$ [$\ell_{M,\varepsilon}$ is defined in Eqs.~\eqref{eq:ell_def_0} and \eqref{eq:ell_def} for Results~\ref{thm_dicke} and~\ref{thm_dicke_improved}, respectively]. Ref.~\cite{buhrman2023state} allows for $M = O(\sqrt N)$ while, for arbitrary $M$, $N_a = O({\rm Poly} (N) )$, $D = O(\ln N)$.} \label{table:scalings}
		\end{ruledtabular}
	\end{table}
	
	The goal of this work is to introduce protocols that save additional resources as compared to existing schemes. As we show, this is achieved by relaxing the condition of preparing the states exactly and deterministically. This does not cause any disadvantage since for any realistic device exact preparation will never be possible. A cornerstone of our schemes is a non-local unitary operation that can be efficiently implemented and that, in contrast to those introduced in Ref.~\cite{piroli2021quantum}, is not Clifford~\cite{gottesman1997stabilizer,gottesman1998theory}. We also show how this operation can help to save resources by creating one-by-one excitations in spin systems. 
	
	In this letter, we identify as resources the depth $D$ of the quantum circuit (QC), the number of experimental repetitions $N_r$, and the number of ancillas per qubit $N_a$ needed in order to produce an infidelity $I=\varepsilon$. It is important to carefully define the depth of the circuit, which will be done later. We anticipate that, contrary to some of the protocols in Ref.~\cite{piroli2021quantum}, we will only allow for LOCC where all the measurements are executed in parallel. We also note that, in our schemes, one can trade among different resources, but we will be mostly concerned with saving $N_a$ and $D$, which are arguably more important for the first generation of quantum computers.
	
	Our main result is to show how to prepare the $N$-qubit \emph{Dicke states}~\cite{dicke1954coherence}
	\be\label{eq:def_dicke}
	|W(M)\rangle = Z^{-1}_M (S^{+})^M |0\ldots 0\rangle\,,
	\ee
	where $S^{\pm }=\sum_{m=1}^N \sigma^{\pm}_m$, $Z_M$ is a normalization factor, while $\sigma_m^{\pm }$ are the ladder operators at position $m$. The states~\eqref{eq:def_dicke} are eigenstates of the Dicke Hamiltonian $H_D = S^+ S^- + S^- S^+$, where $M\in [0,N]$ is the number of excitations. They were defined in the Dicke model of superradiance~\cite{dicke1954coherence, gross1982superradiance}, and are expected to be useful in different kinds of quantum simulations of that model, see e.g. Ref.~\cite{deVega2008matter}. Our interest in these states is two-fold. On the one hand, they play a fundamental role in quantum information science and, in particular, in metrology. As a consequence, a significant amount of experimental~\cite{kiesel2007experimental,wieczorek2009experimental,noguchi2012generation,hume2009preparation} and theoretical~\cite{wu2017generation,duan2003efficient,lamata2013deterministic,ionicioiu2008generalized} work has studied protocols for their preparation in digital and analog quantum platforms. On the other hand, Dicke states have resisted previous attempts to devise preparation schemes using finite-depth circuits and a finite number of ancillas per site~\cite{bartschi2019deterministic, piroli2021quantum,tantivasadakarn2023hierarchy}, raising the question of whether there are some fundamental limitations to achieve this task.
	
	The preparation of Dicke states with LOCC has been previously considered in the literature. In Ref.~\cite{piroli2021quantum} a protocol was proposed to prepare the $W$ state that uses a QC with $D=O(1)$, but requires sequential use of LOCC, \emph{i.e.} a $O(N)$ preparation time. In Ref.~\cite{buhrman2023state} an ingenious approach was introduced to deterministically prepare the $W$ and Dicke states with constant depth but $N_a$ scaling with $N$. Instead, the protocols developed in this work allow, for any fixed desired infidelity and a constant number of excitations, $N$-independent resources (Table~\ref{table:scalings}). Our approach is very different from that of Ref.~\cite{buhrman2023state} and arguably simpler.   The physical intuition behind our protocol  is that  the Dicke state may be obtained by measuring the total number of excitations, starting from some suitable unentangled (and thus, easily prepared) initial state.  This strategy is very natural, as it relies on the interpretation of Dicke states as made of quasiparticle excitations. In fact, we note that a similar idea has been first followed in Ref.~\cite{wang2021preparing}, in a very different analog setting. In our work we solve the non-trivial problem of implementing this idea using finite-depth circuits and LOCC.
	
	We also discuss how similar ideas may be useful to prepare certain states of interest in many-body physics. We consider the eigenstates of the XX Hamiltonian and present a deterministic preparation protocol with $D=O(MN)$, where $M$ is the number of excitations. While our protocol is less efficient than the state-of-the-art unitary algorithm requiring $O(N)$ depth~\cite{kivlichan2018quantum,zhang2018quantum,arute2020observation}, our method is of interest as it is in principle applicable to more general states and could lead to further improvement or generalizations. Finally, we also discuss how extensions of our ideas may allow one to prepare eigenstates of interacting spin chains, including the so-called Richardson-Gaudin model~\cite{richardson1963restricted,richardson1964exact}.

	\prlsection{Non-Clifford unitaries from QCs and LOCC} We consider $N$ qubits in one spatial dimension. The associated Hilbert space is $\mathcal{H}=\mathcal{H}_{2}^{\otimes N}$, with $\mathcal{H}_{2}\simeq \mathbb{C}^2$, while we denote by $\{|0\rangle,|1\rangle\}$ the computational basis. We attach to each qubit $N_a$ ancillas. Then, we define the local QCs as the unitaries $ W =  W_{\ell} \ldots W_2 W_1$, where each ``layer'' $W_{n}$ contains quantum gates acting on disjoint pairs of nearest-neighbor qubits and possibly the associated ancillas. In between each layer, we allow for LOCC consisting of a round of measurements executed in parallel, classical processing of the outcomes and local corrections (executed in parallel). We define the circuit depth as the total number of unitary layers and LOCC steps. 
	
	We begin by showing how  to implement non-Clifford operations of the form
	\be
	\label{eq:unitary}
	V = |0\rangle_b\langle 0| \otimes U^{(0)} + |1\rangle_b\langle 1| \otimes U^{(1)}\,,
	\ee
	where $U^{(k)}=\otimes_{j=1}^N U_{k,j}$ and $U_{k,j}$ act on system qubit $j$, with $k=0,1$. Here, $\ket{\pm}=(\ket{0}\pm \ket{1})/\sqrt{2}$, while $b$ is the ancilla placed at position $1$. The form \eqref{eq:unitary} includes quantum fan-out gates, which are useful in quantum computing~\cite{hoyer2005quantum,takahashi2016collapse,buhrman2023state}. We prove the following:
	\begin{rslt}\label{lem_v}
		$V$ can be implemented deterministically (i.e. by a single repetition $N_r=1$), using $N_a=1$ and $D=6$.
	\end{rslt}
	\begin{figure}
		\includegraphics[scale=0.42]{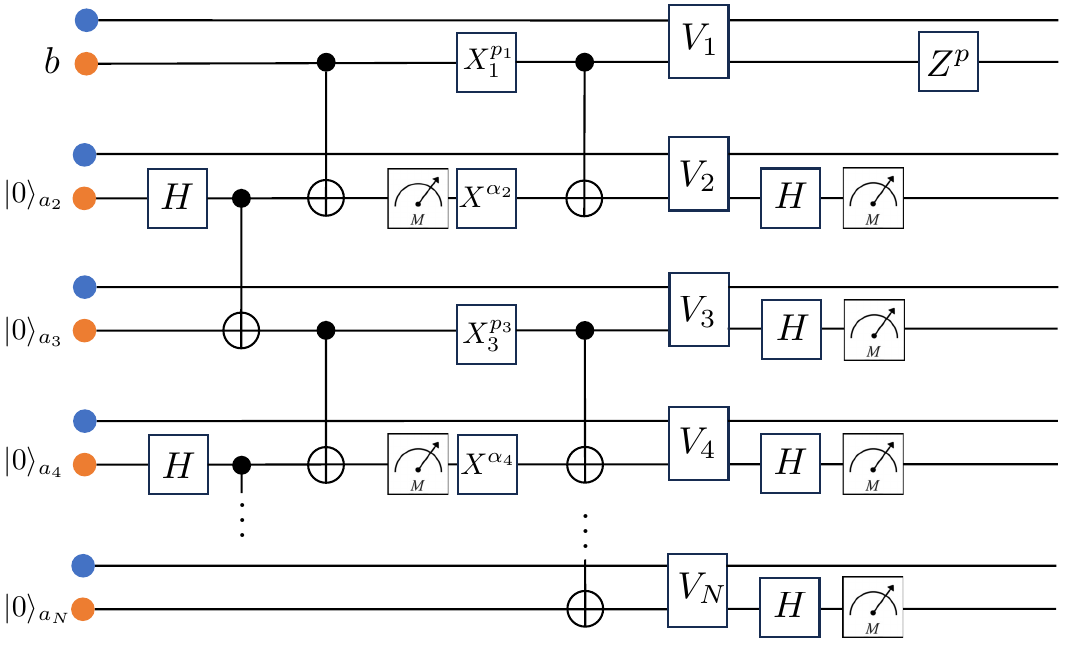}
		\caption{Quantum circuit implementing the unitary in Eq.~\eqref{eq:unitary} (the exchange of bits via classical communication is not shown). Physical and ancillary input qubits are denoted by blue and orange circles, respectively. $X$ and $Z$ are Pauli operators, whose exponents $\alpha_j$, $p_j$ and $p$ are defined in the main text, together with $V_j$. All measurements are in the $Z$-basis.}
		\label{fig:circuit}
	\end{figure}
	
	\noindent Given the (unnormalized) joint input state $\ket{0}_b\ket{\psi_0} + \ket{1}_b\ket{\psi_1}$, the QC implementing $V$ is depicted in Fig.~\ref{fig:circuit} and detailed below. In the first layer, the circuit creates maximally entangled pairs between neighboring ancillas $
	\ket{\Phi^+}_{2j,2j+1}$ ($j=1,2,\dots, N/2-1$). Second,  ${\rm CNOT}_{2j-1,2j}$ gates are applied over pairs of ancillas, except for the last one, $j=1,\ldots ,(N/2-1)$. This layer is followed by a LOCC step: we measure all even ancillas in the $Z$-basis, obtaining measurement outcomes $\alpha_{2j} \in \{0,1\}$, and apply local Pauli corrections $X_{k}^{p_k}$ over all odd ancillas $k\ge 3$, where $p_k = \sum_{2j<k}\alpha_{2j}$. At the same time, the decoupled even ancillas are rotated to the $\ket{0}_{2j}$ state. We then apply another layer of ${\rm CNOT}_{2j-1,2j}$, $j=1,\ldots, N/2$ to all ancilla pairs, yielding the joint state $\ket{0}^{\otimes N}\ket{\psi_0} + \ket{1}^{\otimes N}\ket{\psi_1}$, and proceed by applying to each ancilla and system qubit the control unitary $V_j=|0\rangle \langle 0|\otimes U_{0,j} + |1\rangle\langle
	1|\otimes U_{1,j}$. Finally, we perform a LOCC step: we measure all ancillas except $b$ in the $\ket{\pm}$ basis, yielding the outcomes $\{\beta_j\}_{j=2}^N$ and apply $Z^p_b$ where $p$ is the parity of $\sum_j\beta_j$. This yields $\ket{0}_b U^{(0)}\ket{\psi_0} + \ket{1}_b U^{(1)}\ket{\psi_1}$~\footnote{Implementing the GHZ state as in \cite{piroli2021quantum} yields an alternative protocol for $V$ with $N_a = 2$ and $D = 5$, which may be advantageous for 8-level systems.}. 
	
	\prlsection{Measuring the number of excitations} The unitary~\eqref{eq:unitary} is the key ingredient to our preparation protocol for the Dicke state, as it allows for an efficient measurement of the number of excitations. Consider the state $\ket{\psi}$ and let us define the excitation number $N_e=\sum_{j}n_j$, where $n_j=(1-\sigma^z_j)/2$. Denoting by $\Pi_{j}$ the projector onto the eigenspace of $N_e$ associated with the eigenvalue $j$, we wish to implement the corresponding measurement. It turns out that it is possible to implement a closely related measurement using shallow QCs and LOCC, corresponding to the projectors $\Pi^\ell_{j}=\sum_{i\in \mathcal{T}_{j}^{\ell}}\Pi_{i}$ where $\mathcal{T}_{j}^{\ell}$ is the set of indices $i$ such that $i\equiv j$ (mod $2^\ell$). In particular, we obtain the following:
	\begin{rslt}\label{lem:new_pi}
		The measurement corresponding to the set $\{\Pi^\ell_{j}\}_j$ can be implemented using a circuit with $D=O(\ell)$, $N_a=1$ and $\ell$ additional ancillas.
	\end{rslt}
	
	\begin{figure}
		\includegraphics[scale=0.45]{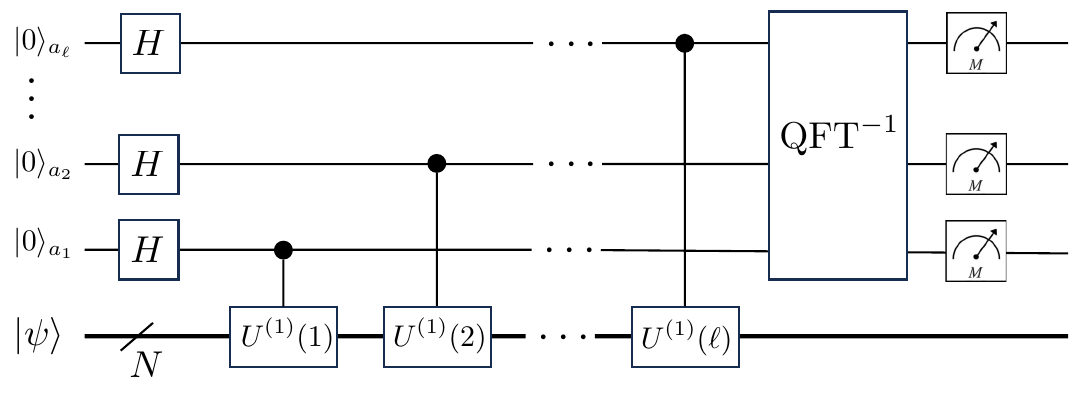}
		\caption{Quantum circuit with $D=O(\ell)$ implementing the measurements corresponding to $\{\Pi_j^\ell\}_j$. The bottom thick line corresponds to the physical Hilbert space of $N$ qubits, while $\ell$ ancillas are attached to the first qubit. Each control-$U$ operation is implemented with depth $O(1)$ via the unitary $V$ in Eq.~\eqref{eq:unitary}. All measurements are performed in the $Z$-basis.}
		\label{fig:measurement}
	\end{figure}

	\noindent The circuit implementing this measurement is represented in Fig.~\ref{fig:measurement}. Attaching all $\ell$ ancillas, initialized in $|0\rangle$, to the first site, the circuit applies to each of them, sequentially, a controlled operator consisting of the unitary operation $V$ in Eq.~\eqref{eq:unitary} with  $U^{(0)}=\openone$ and $U^{(1)}=U^{(1)}(x)=e^{i 2 \pi N_e /2^x}$, where $x=1,\ldots,\ell$ corresponding to each ancilla. At the end of the circuit, an inverse Quantum Fourier Transform (QFT) is applied to the $\ell$ ancillas. This unitary requires depth $D = O(\ell)$~\cite{maslov2007linear} (even assuming 1D locality constrains). It is easy to see that these operations map a state $\ket{\psi}$ into $\sum_{i_1,\ldots, i_\ell=0}^1 |i_1,\ldots,i_\ell\rangle \otimes \Pi^\ell_i |\psi\rangle$, where $i_1\cdots i_\ell$ is the binary representation of $i$. The desired measurement, with the expected probability distribution, is then achieved by performing a projective measurement onto the $\ell$ ancillas. Note that $U^{(1)}(x)=[U^{(1)}(\ell)]^{2^{\ell-x}}$, and thus the protocol is the same of the phase estimation algorithm~\cite{kitaev1995quantum}, with the difference that $\ket{\psi}$ is not an eigenstate for $U^{(1)}(\ell)$. We note that a similar construction to measure the number of excitations was first given in Ref.~\cite{wang2021preparing}.
	
	\prlsection{Preparation of Dicke states} We are now in a position to describe our protocol for the preparation of the Dicke state $\ket{W(M)}$. Fixing $M\leq N/2$, set $p=M/N$ and define $\ket{\Psi(p)}=(\sqrt{1-p}\ket{0}+\sqrt{p}\ket{1})^{\otimes N}$which can be trivially prepared with $D=1$. Now, if we could perform a measurement of the number of excitations and force the outcome to $M$, then we would obtain $\ket{W(M)}$. This is because of the identity
	\begin{equation}\label{eq:psi_p}
		\ket{\Psi(p)}=\sum_{e=0}^N\left[\binom{N}{e}p^e(1-p)^{N-e}\right]^{1/2}\ket{W(e)}\,,
	\end{equation}
	which implies $\Pi_M \ket{\Psi(p)}\propto \ket{W(M)}$. Based on this observation and our previous results, it is easy to devise a preparation scheme. The idea is to perform a measurement corresponding to the projectors $\{\Pi_j^\ell\}_j$ for sufficiently large $\ell$, and repeat the procedure $N_r$ times until we get the desired measurement outcome $M$. At the end of this procedure we obtain a final state $\ket{\psi_\ell}\propto \Pi_M^\ell \ket{\Psi(p)}$. The accuracy of the protocol is controlled by the infidelity $I=|1-|\braket{W(M)|\psi^2_\ell}|^2|$, while the number of repetitions depends on  the probability $P_M$ of obtaining the outcome $M$. By inspection of the state~\eqref{eq:psi_p}, we find $I\sim e^{-2^\ell}$, $P_{M}\sim M^{-1/2}$~\cite{SM}, and we arrive at:
	\begin{rslt}[Preparation of Dicke states]\label{thm_dicke}
		Up to an infidelity $I=\varepsilon$, the Dicke state $\ket{W(M)}$ can be prepared with $N_r=O(\sqrt{M})$, $N_a=1$, $D=O(\ell_{M,\varepsilon})$ and $\ell_{M,\varepsilon}$ additional ancillas, where
		\begin{equation}\label{eq:ell_def_0}
			\ell_{M,\varepsilon}={\rm max}\left\{\log_2(4M), 1+ \log_2 \ln (\sqrt{8\pi M}/\varepsilon) \right\}\,.
		\end{equation} 
	\end{rslt}
	\noindent Alternatively, by slight modifications of the protocol, it is not difficult to show that one can trade the depth with the number of ancillas, realizing a circuit with $N_a=O(\ell_{M,\varepsilon})$, $N_r=O(\sqrt{M})$, $D=O(1)$~\cite{SM}, cf. Table~\ref{table:scalings}. Note that both the number of repetitions and the depth of the circuit do not scale with the system size. In addition, note that we assumed that $\ell_{M,\varepsilon}$ is smaller than $\log_2(L)$. Indeed, for $D=\log_2(L)$ the circuit in Fig.~\ref{fig:measurement} performs a measurement of $N_e$, so the Dicke state is prepared exactly.
	
	We stress that a small infidelity (independent of $N$) automatically guarantees an accurate description of correlation functions. Indeed, denoting by $\braket{\mathcal{O}}_\psi=\braket{\psi|\mathcal{O}|\psi}$, we have $|\braket{\mathcal{O}}_\psi- \braket{\mathcal{O}}_\phi|\leq 2 (1-|\braket{\psi|\phi}|^2)^{1/2}||\mathcal{O}||_{\infty}$, where $||\mathcal{O}||_{\infty}$ is the operator norm. Since the latter equals one for any product of Pauli matrices, we obtain that correlation functions in the prepared state will be arbitrarily close to those of the Dicke state. Finally, in some cases one may need to obtain the Dicke state up to an exponential accuracy. In this case, as mentioned, we can run our protocol implementing the measurement of $N_e$ exactly, leading to an overall depth $O(\log L)$.
	
	\prlsection{The $W$ state} For $M=1$, the above construction gives us an efficient protocol for the $W$ state. In this case, there exists an alternative construction which, while less efficient, is simpler and could be of interest for implementation in NISQ devices. The idea is to prepare the product state $(\sqrt{1-\delta/N} \ket{0}+\sqrt{\delta/N} \ket{1})^{\otimes N}$, and simply measure the parity of the excitations. The protocol is successful if the outcome is odd, in which case it yields a state which we call $\ket{\Phi(\delta)}$. Denoting by $\ket{W}$ the $W$ state, it is easy to see that $|1-|\braket{W|\Phi(\delta)}|^2\leq \delta^2/4$ and that the probability of success is larger than $\delta/2$. On the other hand, the measurement of the parity corresponds to the set $\{\Pi_j^{\ell}\}_j$ with $\ell=1$, so it can be done efficiently using Result~\ref{lem:new_pi}. Therefore, we have the following:
	\begin{rslt}\label{thm_W}
		Up to an infidelity $I=\varepsilon$, the $W$ state can be prepared with $N_r=O(1/\sqrt{\varepsilon})$, $N_a=1$, $D=O(1)$.
	\end{rslt}
	
	\prlsection{Improved scheme via amplitude amplification} 
	Using our previous protocol, the average preparation time of the Dicke state scales as $N_r=O(\sqrt M)$, because we have to do this number of repetitions to have a high probability of success. The reason is that, given the initial state $\ket{\Psi(p)}$, the probability of having $M$ excitations scales as $1/\sqrt M$. We now show how we can exploit the Grover algorithm (or its practical version, named amplitude amplification protocol (AAP)~\cite{brassard1997exact,grover1998quantum,brassard1998quantum,brassard2002quantum}) to improve this result. It is important to notice that a direct application of that algorithm makes the resources dependent on the system size, $N$, something that we want to avoid. Thus, we have to devise an alternative method, which is consistent with the approximation, that circumvents this obstacle.
	
	We recall that, given $\ket{\psi}=\sin\alpha \ket{\psi_1}+\cos\alpha \ket{\psi_2}$, and denoting by $\ket{\tilde \psi}$ the state orthogonal to $\ket{\psi}$ in the subspace generated by $\ket{\psi_1}$ and $\ket{\psi_2}$, the AAP allows one to obtain $\ket{\psi_1}$ by applying a product of $O(1/\alpha)$ unitaries $S_1(\omega_j)$, $S_2(\omega_j)$ (for $\alpha$ small), which act as follows
	\begin{subequations}\label{eq:s1s2}
		\begin{align}
			S_1(\omega)\ket{\psi}&=e^{i\omega}\ket{\psi}\,, \qquad S_1(\omega)\ket{\tilde \psi}=\ket{\tilde \psi}\,,\\
			S_2(\omega)\ket{\psi_1}&=e^{i\omega}\ket{\psi_1}\,, \qquad S_2(\omega)\ket{\psi_2}=\ket{\psi_2}\,,
		\end{align}
	\end{subequations}
	where $\omega_j\in \mathbb{R}$ depend on $\alpha$. Writing $\ket{\Psi(p)} = \sin\alpha \ket{W(M)}+\cos\alpha \ket{R}$, we see that if $S_1(\omega)$, $S_2(\omega)$ can be implemented with circuits of constant depth, then the AAP gives us a deterministic algorithm to obtain $\ket{\psi_1}$ with $D=O(M^{1/4})$, thus reducing the preparation time. 
	
	Realizing the operators in Eqs.~\eqref{eq:s1s2} exactly could be done by known methods using $N_a=\log_2(N)$ and $D=O(1)$~\cite{buhrman2023state}. Instead, we show that, applying ideas similar to those developed so far, an approximate version of them can be realized using a finite amount of resources~\cite{SM}. This leads to the following improved version of Result~\ref{thm_dicke}:
	\begin{rslt}[Improved scheme via amplitude amplification]\label{thm_dicke_improved}
		Up to an infidelity $I=\varepsilon$, the Dicke state $\ket{W(M)}$ can be prepared deterministically ($N_r=1)$, with $N_a=1$, $\ell_{M,\varepsilon}$ additional ancillas, and $D=O(M^{1/4}\ell^2_{M,\varepsilon})$, where 
		\begin{align}\label{eq:ell_def}
			\ell_{M,\varepsilon} = \log_2&\left\{\frac{1}{\ln (4/3)}\left[ 2M(\ln 2M+9/2)\right.\right.\nonumber\\
			+&\left.\left.\ln \left({\rm Poly}(M)/\varepsilon^2\right)	\right]\right\}\,.
		\end{align} 
	\end{rslt}
	
	\prlsection{Eigenstates of the XX Hamiltonian} Going beyond the Dicke model, the previous ideas have ramifications for other Hamiltonians whose eigenstates are labeled by the number of excitations. As a first example, we discuss the well-known XX spin chain $H=-\sum_{k=1}^{N-1} (\sigma^x_k\sigma^x_k+\sigma^y_k\sigma^y_k)$. This model can be solved via the Jordan-Wigner (JW) transformation  $a_k=\left(\prod_{j=1}^{k-1} \sigma^z_j\right)\sigma^-_k$, mapping it to a non-interacting Hamiltonian $H=-\sum_{k=1}^{N-1} (a^\dagger_j a_{j+1}+ {\rm h.c.})$, where $\{a^\dagger_j,a_k\}=\delta_{j,k}$. Accordingly, the eigenstates read $\ket{\Phi(M)}=A^\dagger_M \cdots A^\dagger_1 \ket{0}^{\otimes N}$ with
	\begin{equation}\label{eq:creation}
		A_\alpha=\sum_{k=1}^N c^{\alpha}_k \left(\prod_{j=1}^{k-1} \sigma^z_j\right)\sigma_k^-\,.
	\end{equation}
	Here, $\{c^{\alpha}_k\}$ are distinct sets of coefficients, such that $\{A_\alpha, A_\beta\}=0$,  $\{A^\dagger_\alpha,A_\beta\}=\delta_{\alpha,\beta}$~\cite{franchini2017introduction}, while $M=0,\ldots, N$.
	
	The form of the eigenstates is superficially similar to that of the Dicke states, but it is more complicated due to non-uniform coefficients $c_k^\alpha$ and the string operators $\prod_j \sigma^z_j$. Yet, the anticommutation relations of $A_\alpha$ allows us to devise an efficient preparation protocol. Indeed, the latter implies that $\ket{\Phi(M)}=W_M\cdots W_1 \ket{0}$, where $W_{j}=e^{i\pi(A_j+A_j^\dagger)/2}$. The $W_j$ are unitary and, using our previous constructions, we find that they can be realized deterministically with depth $D=O(N)$~\cite{SM}. Therefore, the eigenstates of the XX Hamiltonian with $M$ excitations can be prepared deterministically ($N_r=1$) by a QC with LOCC of depth $D=O(NM)$ and $N_a=1$. 
	
	The preparation of spin states which can be mapped onto free (or Gaussian) fermionic states has been considered before~\cite{verstraete2009quantum,ferris2014fourier,evenbly2016entanglement,haegeman2018rigorous,kivlichan2018quantum,zhang2018quantum,arute2020observation,babbush2018low,witteveen2021quantum,sopena2022algebraic,ruiz2023bethe}. Ref.~\cite{kivlichan2018quantum} finds a unitary algorithm preparing arbitrary Gaussian operators with depth $O(N)$, yielding a more efficient protocol. However, our approach also allows us to prepare states which are not Gaussian and in principle out of the reach of previous work. For instance, we could prepare $A^\dagger_{\alpha_1}\cdots A^\dagger_{\alpha_n}\ket{\phi_0}$, where $\ket{\phi_0}$ is any linear combination of Gaussian states (assuming $\ket{\phi_0}$ can be prepared efficiently). We also expect that our method could be further improved and generalized to more interesting situations.
	
	\prlsection{Eigenstates of interacting Hamiltonians} As a final example, we consider general states of the form
	\be
	\label{Psiu}
	|\Psi_M\rangle= B_M^\dagger\ldots B_1^\dagger |0\rangle^{\otimes N}\,,
	\ee
	where $B^\dagger_\alpha =\sum_{j=1}^N c_j^\alpha \sigma_j^+$ are interpreted as creating spin excitations. These states are quite general, including the Dicke states and the eigenstates of the so-called Richardson-Gaudin spin chain~\cite{richardson1963restricted,richardson1964exact}, an interacting integrable model. Without assumptions on the coefficients $c_j^\alpha$, efficient preparation of~\eqref{Psiu} is challenging. Here, we will assume that we are in the ``low excitation regime'', namely $M\ll N$, and that
	$\sum_{j=1}^N \bar c_j^\alpha c_j^\beta = \delta_{\alpha,\beta} + O(M/N)$. If $|\psi\rangle$ has at most $M$ excitations, this implies
	\be
	\label{eq:comm}
	[B_\alpha,B^\dagger_\beta]|\psi\rangle = \delta_{\alpha,\beta}|\psi\rangle+ O(M/N)\,.
	\ee
	Namely, $B^\dagger_\alpha$ act as creation operators, up to a $O(M/N)$ error. This allows us to devise a simple preparation protocol, sketched below, and estimate the number of resources needed. We postpone a more detailed analysis of the states~\eqref{Psiu} to future work, including a full study of the Richardson-Gaudin eigenstates.
	
	First, suppose that ~\eqref{eq:comm} holds exactly, \emph{i.e.} without the term $O(M/N)$. Then, we create the state~\eqref{Psiu} by induction. Assuming we have prepared $|\Psi_{M-1}\rangle$, we apply
	\be
	\label{Psie}
	e^{i \theta (B_M+B_M^\dagger)}|\Psi_{M-1}\rangle= \sum_{k=0}^N d_k |\Psi_{M-1+k}\rangle\,,
	\ee
	where we used $B_M|\Psi_{M-1}\rangle=0$, so that the number or excitations cannot decrease. Now, we measure the number of excitations, using the circuit described in Result~\ref{lem:new_pi}, neglecting for simplicity exponentially small errors in the circuit depth. In case we obtain $k=1$ we have succeeded. If $k=0$, we have not changed anything so that we can repeat the procedure. If $k\ge 2$, then we have failed. The probability of failing and obtaining $M=1$ are respectively $O(\theta^4)$ and $O(\theta^2)$. We can iterate this procedure to prepare $\ket{\Psi_M}$ starting from $\ket{0}^{\otimes N}$. It is easy to show that the preparation time scales as $O(M/\theta^2)$, while the success probability as $O(e^{-M\theta^2})$, independent of $N$.
	
	If we do not neglect the term $O(M/N)$ in~\eqref{eq:comm}, then the above construction introduces additional errors. While this is not relevant for the probabilities, the state \eqref{Psie} contains corrections for each $M$. The latter can be estimated as follows. If we have to repeat the procedure $r$ times (on average), the error is $r M/N$ for each step. Accordingly, the total error will be $\varepsilon=r M^2/N$. Since $r=1/\theta^2$, and the probability of not detecting $M=0,1$ in any procedure scales as $p_{\rm fail}=r M \theta^4$, by setting $p_{\rm fail}=1/2$ we have $\varepsilon=M^3/N$. Thus, this allows us to create $M=O(N^{1/3})$ excitations if we take $N$ large.
	
	\prlsection{Outlook} We have introduced protocols to prepare many-body quantum states using QCs and LOCCs. We have shown how we can save resources by relaxing the condition of preparing the states exactly and deterministically, but allowing for controlled infidelities and probabilities of failure. Our results are expected to be relevant for quantum-state preparation in present-day quantum devices, also in light of recent experiments operating QCs assisted by feedforward operations~\cite{baumer2023efficient,chen2023realizing,iqbal2023creation,sukeno2023quantum}. Our work also raises several theoretical questions. For instance, it would be interesting to explore the possibilities of this approach to prepare eigenstates of more general interacting Hamiltonians. In addition, it would be important to understand how the classification of phases of matter via quantum circuits and LOCC introduced in Ref.~\cite{piroli2021quantum} is modified by allowing for finite infidelities. We leave these questions for future work.
	
	\prlsection{Acknowledgments}  We thank Harry Buhrman and Marten Folkertsma for useful discussions. The research is part of the Munich Quantum Valley, which is supported by the Bavarian state government with funds from the Hightech Agenda Bayern Plus. We acknowledge funding from the German Federal Ministry of Education and Research (BMBF) through
	EQUAHUMO (Grant No. 13N16066) within the funding
	program quantum technologies—from basic research to market. This work was funded by the European Union (ERC, QUANTHEM, 101114881). Views and opinions expressed are however those of the author(s) only and do not necessarily reflect those of the European Union or the European Research Council Executive Agency. Neither the European Union nor the granting authority can be held responsible for them.

	
	\let\oldaddcontentsline\addcontentsline
	\renewcommand{\addcontentsline}[3]{}
	\bibliography{bibliography}
	\let\addcontentsline\oldaddcontentsline

	\onecolumngrid
	\newpage

	\appendix
	\setcounter{equation}{0}
	\setcounter{figure}{0}
	\renewcommand{\thetable}{S\arabic{table}}
	\renewcommand{\theequation}{S\thesection.\arabic{equation}}
	\renewcommand{\thefigure}{S\arabic{figure}}
	\setcounter{secnumdepth}{2}

	\begin{center}
		{\Large \bf Supplemental Material}
	\end{center}
	
	Here we provide additional details about the results stated in the main text.
	
	\tableofcontents

	\section{Preparation of the Dicke states}
	\label{app:preparation_dicke}
	In this section we provide additional details on the preparation of the Dicke states. Denoting by $N$ and $M$ the number of qubits and excitations, respectively, and by $\ket{W_N(M)}$ the Dicke state, we provide a full proof for the following statement:
	\begin{prop}\label{prop:dicke_improved}
		For any $\varepsilon>0$, there exists a (non-deterministic) protocol which prepares a state $\ket{\Psi_N}$ with 
		\begin{equation}
			|1-|\braket{W_N(M)|\Psi_N}|^2|\leq\varepsilon\,.
		\end{equation}
		The protocol  is successful with probability
		\begin{equation}
			P_{\rm succ}\geq \frac{1}{\sqrt{8\pi M}}\,,
		\end{equation}
		it uses $N_a=1$ ancilla per site, $D=O(\ell)$ and $\ell$ additional ancillas where
		\begin{equation}\label{eq:definition_ell}
			\ell={\rm max}\left\{\ln(4M)/\ln2, 1+ \frac{\ln \ln (\sqrt{8\pi M}/\varepsilon)}{\ln2} \right\}\,,
		\end{equation}
		independent of $N$. 
	\end{prop}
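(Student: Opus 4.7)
The plan is to start from the product state $|\Psi(p)\rangle$ with $p=M/N$ (which has preparation depth $1$), apply the measurement of Result~\ref{lem:new_pi} with $\ell$ chosen as in Eq.~\eqref{eq:definition_ell}, and post-select on the outcome $M\bmod 2^\ell$. The resource counts $N_a=1$, $D=O(\ell)$, and $\ell$ additional ancillas are then inherited directly from Result~\ref{lem:new_pi}; the remaining work is to verify the two claimed inequalities on infidelity and success probability.

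The first step is to make the Dicke-basis expansion explicit: using Eq.~\eqref{eq:psi_p} and the definition of $\Pi_M^\ell$, the unnormalized post-measurement state is $\Pi_M^\ell|\Psi(p)\rangle=\sum_{e\equiv M\,(\mathrm{mod}\,2^\ell)}\sqrt{P_N(e)}\,|W_N(e)\rangle$, where $P_N(e)=\binom{N}{e}p^e(1-p)^{N-e}$ is the $\mathrm{Bin}(N,p)$ PMF. Hence the success probability is $P_\mathrm{succ}=\sum_{e\equiv M\,(\mathrm{mod}\,2^\ell)}P_N(e)$ and $|\braket{W_N(M)|\Psi_N}|^2=P_N(M)/P_\mathrm{succ}$, so the infidelity equals $(P_\mathrm{succ}-P_N(M))/P_\mathrm{succ}$. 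The problem thus reduces to two binomial estimates.

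For $P_\mathrm{succ}$ I would drop all but the $e=M$ term and invoke a Stirling lower bound on the central binomial PMF: since $M\le N/2$, one gets $P_N(M)\ge 1/\sqrt{8\pi M(1-M/N)}\ge 1/\sqrt{8\pi M}$, which yields the claimed success probability. For the infidelity, the role of the first entry $\log_2(4M)$ of the max in Eq.~\eqref{eq:definition_ell} is to ensure $2^\ell \ge 4M$, which kills the lower-tail contributions entirely (since $M-k\cdot2^\ell<0$ for all $k\ge 1$) and leaves only upper-tail terms starting at $e\ge 5M$, i.e., many standard deviations of $\mathrm{Bin}(N,M/N)$ above the mean. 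A standard Chernoff/Cramér tail bound then yields $P_\mathrm{succ}-P_N(M)\le e^{-c\,2^\ell}$ for some absolute constant $c>0$, and combining with the $1/\sqrt{8\pi M}$ lower bound on $P_\mathrm{succ}$ gives $I\le \sqrt{8\pi M}\,e^{-c\,2^\ell}$. Setting this at most $\varepsilon$ and solving for $\ell$ reproduces the second entry $1+\log_2\ln(\sqrt{8\pi M}/\varepsilon)$ of the max in Eq.~\eqref{eq:definition_ell}.

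The main technical obstacle I anticipate is tracking the multiplicative constant in the Chernoff step tightly enough to match Eq.~\eqref{eq:definition_ell}: one must bound the restricted sum over the arithmetic progression $\{M+k\cdot 2^\ell\}_{k\ge 1}$ rather than merely the single-tail probability $\mathrm{Pr}(X\ge M+2^\ell)$, and absorb the prefactor $\sqrt{8\pi M}$ without loosening the exponent. The remaining steps, namely the Dicke-basis expansion, the Stirling estimate, and the resource inheritance from Result~\ref{lem:new_pi}, are routine.
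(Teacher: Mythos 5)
Your proposal follows essentially the same route as the paper's proof: the same post-selected measurement of $\{\Pi_j^\ell\}_j$, the same Dicke-basis expansion with $P_{\rm succ}=Z_\ell^2$, the same Robbins--Stirling lower bound $\binom{N}{M}p^M(1-p)^{N-M}\geq 1/\sqrt{8\pi M}$, and the same Chernoff upper-tail estimate (the obstacle you flag is a non-issue, since the sum over the arithmetic progression $\{M+k\,2^\ell\}_{k\geq 1}$ is trivially dominated by the single tail ${\rm Pr}[e\geq M+2^\ell]$). The one detail you leave open---the explicit constant $c$ in the exponent---is settled in the paper by writing the Chernoff bound as $\exp[-N D(\tfrac{M+2^\ell}{N}\|\tfrac{M}{N})]$, observing that $N D(\cdot\|\cdot)$ is decreasing in $N$ and hence at least its asymptotic value $-2^\ell+(2^\ell+M)\ln[(2^\ell+M)/M]\geq 2^{\ell-1}$; note that this last step is where $2^\ell\geq 4M$ is actually used, whereas killing the lower tail only needs $2^\ell>M$, so the first entry of the max in Eq.~\eqref{eq:definition_ell} plays a slightly larger role than you attribute to it.
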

	\noindent Note that, because the probability of success is $O(1/\sqrt{M})$, the protocol needs to be repeated, on average, $O(\sqrt{M})$ times, which is the result announced in the main text.
	\begin{proof}
		As in the main text, we set $p=M/N$ and start with the initial state (ommitting the dependence on $N$)
		\begin{equation}
			\ket{\Psi(p)}=(\sqrt{1-p}\ket{0}+\sqrt{p}\ket{1})^{\otimes N}=\sum_{e=0}^N\left[\binom{N}{e}p^e(1-p)^{N-e}\right]^{1/2}\ket{W(e)}\,.
		\end{equation}	
		Choose $\ell$ as in~\eqref{eq:definition_ell}, and define $\Pi^\ell_{j}=\sum_{i\in \mathcal{T}_{j}^{\ell}}\Pi_{i}$, where $\Pi_i$ is a projector onto the eigenspace with $i$ excitations, while
		\begin{equation}
			\label{eq:def_taux}
			\mathcal{T}_{j}^{\ell}=\{i: i\equiv j \ ({\rm mod} \ 2^\ell)\}.
		\end{equation}
		We perform a measurement with respect to the projectors $\{\Pi^\ell_j\}$ and repeat the precedure until we obtain the outcome $M$. In case of success, the state reads
		\begin{equation}\label{eq:psi_ell_state}
			\ket{\Psi^{(\ell)}}=\frac{1}{Z_\ell}\sum_{e\in \mathcal{T}_M^{\ell}}\left[\binom{N}{e}p^e(1-p)^{N-e}\right]^{1/2}\ket{W(e)}\,,
		\end{equation}
		where $Z_\ell$ is a normalization factor. According to Result~\ref{lem:new_pi}, this measurement can be performed using a circuit with $D=O(\ell)$, $N_a=1$, and $\ell$ additional ancillas.
		
		We need to estimate the success probability and the distance between $\ket{\Psi^{(\ell)}}$ and $\ket{W_N(M)}$. The former is
		\begin{align}\label{eq:succ_prop_dicke}
			P_{\rm succ}&= Z_\ell^2 = \sum_{e\in \mathcal{T}_M^{\ell}}\left[\binom{N}{e}p^e(1-p)^{N-e}\right]\nonumber\\
			&\geq {N\choose M} p^M (1-p)^{N-M}\geq \frac{1}{2}	\left(\frac{N}{2\pi M(N-M)}\right)^{1/2}\geq \frac{1}{\sqrt{8\pi M}} \,,
		\end{align} 
		where we used
		\begin{equation}\label{eq:stirling_inequality}
			\frac{1}{2}	\left(\frac{N}{2\pi M(N-M)}\right)^{1/2}<{N\choose M} p^M (1-p)^{N-M}<2\left(\frac{N}{2\pi M(N-M)}\right)^{1/2}\,,
		\end{equation}
		which holds for $0<M<N$ and can be proved using known inequalities for the factorial~\cite{robbins1955remark}.
		
		For the overlap, we write
		\begin{equation}\label{eq:useful_id1}
			\sum_{e\in \mathcal{T}_M^\ell}\binom{N}{e}(1-p)^{N-e}p^e\leq {N\choose M} p^M (1-p)^{N-M}+{\rm Pr}[e\leq M-2^\ell]+{\rm Pr}[e\geq M+2^\ell]\,.
		\end{equation}
		Since $2^\ell>M$,  we have ${\rm Pr}[e\leq M-2^\ell]=0$. Let us analyze ${\rm Pr}[e\geq M+2^\ell]$. The Chernoff inequality gives~\cite{arratia1989tutorial}
		\begin{equation}\label{eq:useful_id2}
			{\rm Pr}[e\geq M+2^\ell]\leq \exp\left[-N D\left(\frac{M+2^\ell}{N}\Big\|\frac{M}{N}\right)\right]\,,
		\end{equation}
		where $D(\cdot||\cdot)$ is the relative entropy
		\begin{equation}\label{eq:relative_entropy_def}
			D(a \| p)=a \ln \frac{a}{p}+(1-a) \ln \frac{1-a}{1-p}\,.
		\end{equation}
		We have
		\begin{align}
			\lim_{N\to\infty}N D\left(\frac{M+2^\ell}{N}\Big\|\frac{M}{N}\right)&=-2^\ell + (2^\ell + M) \ln[(2^\ell + M)/M]\,,\\
			\frac{d}{dN} N D\left(\frac{M+2^\ell}{N}\Big\|\frac{M}{N}\right)&<0\,.\label{eq:derivative_3}
		\end{align}
		Eq.~\eqref{eq:derivative_3} implies that $N D\left(\frac{M+2^\ell}{N}\Big\|\frac{M}{N}\right)$ is always larger than its asymptotic value. Therefore
		\begin{equation}
			N D\left(\frac{M+2^\ell}{N}\Big\|\frac{M}{N}\right)\geq  -2^\ell + (2^\ell + M) \ln[(2^\ell + M)/M]\,.
		\end{equation}
		Since $2^{\ell}\geq 4M$, we have
		\begin{equation}
			-2^\ell + (2^\ell + M) \ln[(2^\ell + M)/M]\geq 2^{\ell-1}\,.
		\end{equation}
		Setting $C_{N,M}= {N\choose M} p^M (1-p)^{N-M}$, and putting all together, we get
		\begin{align}
			|\braket{W_N(M)|\Psi^{(\ell)}}|^2&=\frac{C_{N,M}}{Z_\ell^2}\geq 1-\frac{P[e\geq M+2^\ell]}{C_{N,M}}\nonumber\\
			&\geq 1-\sqrt{8\pi M}e^{-2^{\ell-1}}\,,
		\end{align}
		where we used $1/(1+x)\geq 1-x$. 
		
		Finally, using 
		\begin{equation}
			\ell\geq 1+ \frac{\ln \ln (\sqrt{8\pi M}/\varepsilon)}{\ln2}\,.
		\end{equation}
		we get $| |\braket{W_N(M)|\Psi^{(\ell)}}|^2-1|\leq \varepsilon$. Therefore, setting $\ket{\Psi_N}=\ket{\Psi^{(\ell)}}$ we obtain the statement.
	\end{proof}
	
	Next, we prove that the protocol can be slightly modified to trade the depth with the number of ancillas.
	\begin{prop}\label{prop:dicke_parallel}
		For any $\varepsilon>0$, there exists a (non-deterministic) protocol which prepares a state $\ket{\Psi_N}$ with 
		\begin{equation}
			|1-|\braket{W_N(M)|\Psi_N}|^2|\leq\varepsilon\,.
		\end{equation}
		The protocol  is successful with probability
		\begin{equation}
			P_{\rm succ}\geq \frac{1}{\sqrt{8\pi M}}\,,
		\end{equation}
		it uses $D=O(1)$, $N_a=O(\ell)$ ancilla per site, and $\ell$ additional ancillas, where $\ell$ is defined in Eq.~\eqref{eq:definition_ell}.
	\end{prop}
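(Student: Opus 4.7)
The plan is to retain the structure of the proof of Proposition~\ref{prop:dicke_improved}---prepare $\ket{\Psi(p)}$, perform the measurement associated with the projectors $\{\Pi_j^\ell\}_j$, and post-select on the outcome $M$---and only modify the implementation of the measurement so as to trade depth for ancillas. The circuit of Fig.~\ref{fig:measurement} incurs depth $O(\ell)$ because the $\ell$ controlled operations of the form $|0\rangle\langle 0|\otimes \openone + |1\rangle\langle 1|\otimes e^{i2\pi N_e/2^x}$ (one per readout ancilla) are applied sequentially. I would reimplement them as a single parallel layer at the cost of a larger per-site ancilla budget.

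Concretely, I would attach $\ell$ readout ancillas to site $1$ as before, and for each $x=1,\ldots,\ell$ reserve an independent fan-out register consisting of one ancilla per site to host a dedicated copy of the unitary $V$ of Eq.~\eqref{eq:unitary}, with control ancilla equal to the $x$-th readout ancilla and with $U^{(0)}=\openone$, $U^{(1)}=e^{i2\pi N_e/2^x}$. This amounts to $N_a=O(\ell)$ per site. By Result~\ref{lem_v}, each copy realizes its controlled operation deterministically in depth $D=6$. Because the $\ell$ copies act on pairwise-disjoint ancillary registers and on pairwise-disjoint control ancillas, and because their actions on the shared system register are simultaneously diagonal in the computational basis and therefore commute, the $\ell$ copies can be executed in a single depth-$6$ LOCC block. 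The final inverse QFT on the $\ell$ readout ancillas followed by a $Z$-basis measurement can be realized in constant depth by invoking the fan-out primitive $V$ of Result~\ref{lem_v} together with the constant-depth fan-out-based QFT construction of Ref.~\cite{hoyer2005quantum}, yielding total depth $D=O(1)$.

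Since the output state conditioned on obtaining outcome $M$ coincides with $\ket{\Psi^{(\ell)}}$ of Eq.~\eqref{eq:psi_ell_state}, the success probability bound $P_{\rm succ}\ge 1/\sqrt{8\pi M}$ and the infidelity bound $|1-|\braket{W_N(M)|\Psi^{(\ell)}}|^2|\le \varepsilon$ established in Proposition~\ref{prop:dicke_improved} transfer verbatim once $\ell$ is chosen as in Eq.~\eqref{eq:definition_ell}. The main obstacle is to verify that the $\ell$ parallel invocations of $V$ do not interfere: one must check that the combined unitary they implement on the system register equals $\sum_{k_1,\ldots,k_\ell\in\{0,1\}}|k_1\cdots k_\ell\rangle\langle k_1\cdots k_\ell|\otimes \exp\!\left(i2\pi N_e \sum_{x=1}^{\ell} k_x/2^x\right)$, i.e. precisely the operator applied before the inverse QFT in the sequential scheme of Fig.~\ref{fig:measurement}. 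This reduces to checking that the internal fan-out, uncomputation and LOCC corrections of each copy commute with those of the other copies, which follows from the disjointness of the ancillary registers and from the fact that the only shared operations on the system qubits are mutually commuting phases.
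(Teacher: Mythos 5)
Your reduction to Proposition~\ref{prop:dicke_improved} for the success probability and infidelity is fine, and the overall strategy (parallelize the $\ell$ controlled phases at the cost of $O(\ell)$ ancillas per site) is the right one. The gap is in the central claim that the $\ell$ copies of $V$ ``can be executed in a single depth-$6$ LOCC block'' because their actions on the system register commute. Commutativity only guarantees that the product is order-independent; it does not make the gates parallelizable. In the elementary-gate model the paper works in (a layer contains gates on \emph{disjoint} qubits; cf.\ the $O(\ell^2)$ depth charged to the multi-controlled phase in the proof of Lemma~\ref{lem:partial_sign}), the step of your scheme in which, at each site $j$, the $\ell$ two-qubit diagonal gates between the GHZ ancillas $a^{(1)}_j,\dots,a^{(\ell)}_j$ and the \emph{same} system qubit $q_j$ are applied cannot be compressed into one layer: at most one gate per layer may touch $q_j$, so this block alone costs depth $\Omega(\ell)$ unless the shared qubit is first fanned out. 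Your circuit therefore still has depth $O(\ell)$, not $O(1)$.

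The missing ingredient --- and the one the paper's proof supplies --- is to fan out the \emph{system} register itself: append $\ell-1$ ancillas per site and copy each computational-basis value, $\ket{j_k}\otimes\ket{0}^{\otimes(\ell-1)}\mapsto\ket{j_k}^{\otimes\ell}$, in constant depth via the LOCC fan-out of Ref.~\cite{buhrman2023state}. The $x$-th controlled phase then acts on the $x$-th \emph{copy} of the system register, so the $\ell$ invocations of Result~\ref{lem_v} act on genuinely disjoint qubits and can be run in parallel; afterwards the copies are uncomputed. The paper also sidesteps your second difficulty, the inverse QFT: the constant-depth fan-out QFT of Ref.~\cite{hoyer2005quantum} is only approximate and would force you to carry an extra error term through the infidelity bound. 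Instead, the paper shifts the phases to $U^{(1)}(x)=e^{2\pi i(N_e-M)/2^x}$, applies Hadamards to the $\ell$ control ancillas and post-selects on the all-zero outcome; this implements $\Pi^\ell_M$ exactly on the success branch with no QFT at all, at the price that the failure branches no longer realize the full measurement $\{\Pi^\ell_j\}_j$ --- which is harmless, since failed runs are discarded and the success probability is unchanged.
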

	\begin{proof}
		Compared to the protocol explained in Prop.~\ref{prop:dicke_improved}, we need to reduce the depth of the circuit to $D=O(1)$. To this end, we need to remove the inverse of the quantum Fourier transform (QFT) in the measurement of the excitations (which requires a depth scaling with the number of ancillas) and parallelize the application of the operators $U^{(1)}(x)$. Our parallelization scheme is closely related to the one of Ref.~\cite{hoyer2005quantum}.
		
		We proceed as follows. We define $\ell$ as in Eq.~\eqref{eq:definition_ell}, and append $\ell-1$ ancillas per site, plus $\ell-1$ additional ancillas to the first site (so, in the first site we have $\ell-1+\ell=2\ell-1$ ancillas). All ancillas are initialized in $\ket{0}$. Suppose the initial state of the system is
		\begin{equation}
			\ket{\psi}=\sum_{\{j_k\}}c_{j_1\ldots j_N} \ket{j_1\ldots j_N}\,.
		\end{equation}
		We perform a controlled operations in each local set consisting of one system qubit and $\ell-1$ ancilla qubits, mapping
		\begin{equation}\label{eq:controlled_op}
			\ket{j_k}\otimes \ket{0}^{\otimes (\ell-1)}\mapsto 	\ket{j_k}\otimes \ket{j_k}^{\otimes (\ell-1)}\,,
		\end{equation}
		yielding the state
		\begin{equation}
			\ket{\Psi}=\sum_{\{j_k\}}c_{j_1\ldots j_N} \ket{j_1\ldots j_N}^{\otimes \ell}\,.
		\end{equation}
		The step~\eqref{eq:controlled_op} corresponds to parallel application of fan-out gates and takes constant depth using LOCC~\cite{buhrman2023state}. Next, we apply a Hadamard transformation to each of the $\ell$ ancillas in the first site. Then, for each of the $\ell$ ancillary systems, we apply a unitary $V^{(x)}$ which acts on the $x$-th copy of the system and the $x$-th ancilla in the first site.
		Each unitary is of the form~\eqref{eq:unitary} with  $U^{(0)}=\openone$ and 
		\begin{eqnarray}
			\label{eq:unitary_modified}
			U^{(1)}=U^{(1)}(x)=e^{2i \pi (N_e-M) /2^x}\,,
		\end{eqnarray}
		where $x=1,\ldots,\ell$ corresponding to each ancilla. These operations can be performed in parallel as they act on distinct qubits.  Next, we act with the inverse of \eqref{eq:controlled_op}, apply a Hadamard transformation to each ancilla and measure them in the $Z$-basis. The protocol is successful if we obtain the outcome $0$ for all ancillas. In this case, it is easy to see that the state after the measurement is proportional to the state~\eqref{eq:psi_ell_state}. Note that the unitary~\eqref{eq:unitary_modified} is different from that used in the measurement procedure explained in the main text (Result~\ref{lem:new_pi}). Indeed, while the final state in the case of success is the same as in the previous Proposition, Eq.~\eqref{eq:psi_ell_state}, the outcome is not equal to a projection onto $\Pi_j^\ell$ for different measurement outcomes.
		
		It is immediate to show that the probability of success and the infidelity of the output state are the same as computed in Prop.~\ref{prop:dicke_improved}, which proves the statement.
	\end{proof}
	
	\section{Dicke states from amplitude amplification}
	\label{sec:amplitude_amplification}
	
	In this section we provide additional details for the preparation of the Dicke state using the amplitude-amplification protocol. We start by recalling the precise statement of the latter.
	
	\begin{lem}[Amplitude amplification]\label{lem:amplitude_amplification}
		Let
		\begin{equation}
			\ket{\psi}=\sin\alpha \ket{\psi_1}+\cos\alpha \ket{\psi_2}\,,
		\end{equation}
		where $\ket{\psi_1}$ and $\ket{\psi_2}$ are orthogonal states, and $\ket{\tilde{\psi}}$ be the state orthogonal to $\ket{\psi}$ in the subspace generated by $\ket{\psi_1}$, $\ket{\psi_2}$. Let $	S_1(\omega)$, $	S_2(\omega)$ be two families of unitary operators such that
		\begin{align}
			S_1(\omega)\ket{\psi}&=e^{i\omega}\ket{\psi}\,, \qquad S_1(\omega)\ket{\tilde \psi}=\ket{\tilde \psi}\,,\\
			S_2(\omega)\ket{\psi_1}&=e^{i\omega}\ket{\psi_1}\,, \qquad S_2(\omega)\ket{\psi_2}=\ket{\psi_2}\,,
		\end{align}
		and define
		\begin{equation}\label{eq:def_q}
			Q(\phi,\varphi)=-S_1(\phi)S_2(\varphi)\,.
		\end{equation}
		Then, if the number $m^*=\pi/(4\alpha)-1/2$ is an integer, we have  
		\begin{equation}
			Q^{m^*}(\pi,\pi)\ket{\psi}\propto \ket{\psi_1}\,.
		\end{equation}
		Otherwise, there exist two values $\phi^*, \varphi^*\in \mathbb{R}$ such that
		\begin{equation}
			Q(\phi^*,\varphi^*)Q^{\lfloor m^*\rfloor}(\pi,\pi)\ket{\psi}\propto \ket{\psi_1}\,,
		\end{equation}
		where $\lfloor \cdot \rfloor$ is the integer floor function. 
	\end{lem}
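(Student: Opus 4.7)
The plan is to exploit the fact that both $S_1(\phi)$ and $S_2(\varphi)$ preserve the two-dimensional subspace $V=\mathrm{span}\{\ket{\psi_1},\ket{\psi_2}\}=\mathrm{span}\{\ket{\psi},\ket{\tilde\psi}\}$ and act as the identity on its orthogonal complement, so the entire problem reduces to a $2\times 2$ computation. Restricted to $V$, the operator $S_2(\varphi)$ is diagonal in the basis $\{\ket{\psi_1},\ket{\psi_2}\}$ with eigenvalues $(e^{i\varphi},1)$, while $S_1(\phi)$ is diagonal in the basis $\{\ket{\psi},\ket{\tilde\psi}\}$ with eigenvalues $(e^{i\phi},1)$. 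Using the change of basis $\ket{\tilde\psi}=\cos\alpha\,\ket{\psi_1}-\sin\alpha\,\ket{\psi_2}$, I would write down an explicit matrix representation of $Q(\phi,\varphi)=-S_1(\phi)S_2(\varphi)$ on $V$ parametrized by $(\alpha,\phi,\varphi)$; this is the only computation I would actually grind out.

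With the matrix in hand, the first step is to specialize to $\phi=\varphi=\pi$. Then $S_1(\pi)=\Id-2\ket{\psi}\bra{\psi}$ and $S_2(\pi)=\Id-2\ket{\psi_1}\bra{\psi_1}$ are reflections on $V$, and the product of two reflections is a rotation of the real plane $V$. Reading off the rotation angle from the explicit matrix yields $2\alpha$, with orientation chosen to increase the $\ket{\psi_1}$ amplitude. Since $\ket{\psi}$ makes angle $\pi/2-\alpha$ with $\ket{\psi_1}$, after $m$ applications the angle becomes $\pi/2-(2m+1)\alpha$, which vanishes precisely for $m=m^{*}=\pi/(4\alpha)-1/2$. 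Whenever $m^*$ is a non-negative integer, this already proves the first claim of the lemma.

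The second step addresses the non-integer case. After $\lfloor m^*\rfloor$ applications of $Q(\pi,\pi)$ the output lies in $V$ at some residual angle $\beta\in(0,2\alpha)$ away from $\ket{\psi_1}$, so what remains is to exhibit $(\phi^*,\varphi^*)$ such that $Q(\phi^*,\varphi^*)$ sends this residual vector onto one proportional to $\ket{\psi_1}$. Using the explicit $2\times 2$ form, the requirement ``the $\ket{\psi_2}$ component of the image vanishes'' becomes two real trigonometric equations in the two unknowns $(\phi^*,\varphi^*)$. The main obstacle is verifying that these equations are solvable for every $\beta\in(0,2\alpha)$. I would address it by parametrizing $Q(\phi,\varphi)$ as an element of $\mathrm{U}(V)$ and noting that, as $(\phi,\varphi)$ range over $[0,2\pi)^2$, the resulting family is rich enough to map any fixed unit vector of $V$ onto any other unit vector up to an irrelevant global phase, which is visible directly from the explicit matrix. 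Once this transitivity-type statement is established, setting $(\phi^*,\varphi^*)$ to the corresponding solution closes the argument, completing the proof along the lines of the Brassard--Hoyer--Mosca--Tapp exact amplitude amplification.
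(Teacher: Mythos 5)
Your overall architecture---restriction to the invariant two-plane $V=\mathrm{span}\{\ket{\psi_1},\ket{\psi_2}\}$, identification of $Q(\pi,\pi)$ as minus a product of two reflections and hence, up to an irrelevant sign, a rotation by $2\alpha$, and a final corrected iteration $Q(\phi^*,\varphi^*)$---is exactly the Brassard--H{\o}yer--Mosca--Tapp argument. The paper does not reprove the lemma (its ``proof'' is a citation to Sec.~2.1 of that reference), so attempting a self-contained proof is fine, and your first step (the integer case) is correct. One small inaccuracy: the hypotheses do not say that $S_1,S_2$ act as the identity on $V^\perp$; they only prescribe the action on two bases of $V$. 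What you need, and what follows from unitarity, is merely that $V$ is invariant, which suffices since $\ket{\psi}\in V$ and the whole evolution stays there.

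The genuine gap is in the justification of the non-integer case. The claim that, as $(\phi,\varphi)$ ranges over $[0,2\pi)^2$, the family $Q(\phi,\varphi)$ can map any unit vector of $V$ onto any other up to a global phase is false. Concretely, try to send $\ket{\psi_2}$ to a multiple of $\ket{\psi_1}$: since $S_2(\varphi)\ket{\psi_2}=\ket{\psi_2}$ and, on $V$, $S_1(\phi)=\openone+(e^{i\phi}-1)\ket{\psi}\bra{\psi}$, the $\ket{\psi_2}$-component of $Q(\phi,\varphi)\ket{\psi_2}$ equals $-(e^{i\phi}\cos^2\alpha+\sin^2\alpha)$, which can vanish only if $\tan^2\alpha=1$; for small $\alpha$ no choice of phases works. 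The statement you actually need is conditional on the residual angle being small. Writing the state after $\lfloor m^*\rfloor$ iterations as $\cos\beta\,\ket{\psi_1}+\sin\beta\,\ket{\psi_2}$ with $\beta=\pi/2-(2\lfloor m^*\rfloor+1)\alpha\in[0,2\alpha)$, the requirement that the $\ket{\psi_2}$-component of $Q(\phi,\varphi)$ applied to it vanish is one complex equation; eliminating $\phi$ (using that $e^{i\phi}-1$ ranges over the circle of radius $1$ centered at $-1$) reduces it to $\cos\varphi=-\cot(2\alpha)\tan\beta$, which admits a solution precisely because $\tan\beta\leq\tan(2\alpha)$, i.e.\ precisely because $\beta<2\alpha$. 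So the bound $\beta\in[0,2\alpha)$ that you derived is not incidental bookkeeping: it is the hypothesis that makes the final pair of trigonometric equations solvable, and the transitivity shortcut you propose in its place does not hold. Replacing that shortcut with the explicit solvability computation closes the proof.
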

	\begin{proof}
		The proof can be found in Refs.~\cite{brassard1997exact,grover1998quantum,brassard2002quantum}, see in particular Sec. 2.1 in Ref.~\cite{brassard2002quantum}. Note that the lemma states that we can deterministically obtain the state $\ket{\psi_1}$, provided that we can implement the operators $Q(\phi,\varphi)$. They need to be applied a  number of times growing as $\sim 1/\alpha$.
	\end{proof}
	
	Next, we show that the amplitude amplification protocol may be carried out even when the unitaries $S_1(\omega)$ and $S_2(\omega)$ can only be implemented approximately.
	
	\begin{lem}[Approximate amplitude amplification]\label{lem:approximate_amplitude_amplification}
		Let
		\begin{equation}
			\ket{\psi}=\sin\alpha \ket{\psi_1}+\cos\alpha \ket{\psi_2}\,,
		\end{equation}
		where $\ket{\psi_1}$ and $\ket{\psi_2}$ are orthogonal states, and $\ket{\tilde{\psi}}$ be the state orthogonal to $\ket{\psi}$ in the subspace generated by $\ket{\psi_1}$, $\ket{\psi_2}$. Fix $1>\delta>0$ and let $T_1(\omega)$, $T_2(\omega)$ be two families of unitary operators such that
		\begin{subequations}\label{eq:approximate_unitaries}
			\begin{align}
				T_1(\omega)\ket{\psi}&=e^{i\omega}\ket{\psi}\,, \qquad T_1(\omega)\ket{\tilde \psi}=\ket{\tilde \psi}+\varepsilon_1\ket{v}\,,\\
				T_2(\omega)\ket{\psi_1}&=e^{i\omega}\ket{\psi_1}\,, \qquad T_2(\omega)\ket{\psi_2}=\ket{\psi_2}+\varepsilon_2\ket{w}\,,
			\end{align}
		\end{subequations}
		where $0<|\varepsilon_1|,|\varepsilon_2|<\delta/2$, while $\ket{v}$, $\ket{w}$ are normalized states. Finally, set
		\begin{equation}
			P(\phi,\varphi)=-T_1(\phi)T_2(\varphi)\,.
		\end{equation}
		If the number $m^*=\pi/(4\alpha)-1/2$ is an integer, define
		\begin{equation}
			\ket{\chi}=P^{m^*}(\pi,\pi)\ket{\psi}\,,
		\end{equation}
		otherwise, define
		\begin{equation}
			\ket{\chi}= P(\phi^*,\varphi^*)P^{\lfloor m^*\rfloor}(\pi,\pi)\ket{\psi}\,,
		\end{equation}
		where $\lfloor \cdot \rfloor$ is the integer floor function and $\phi^*$, $\varphi^*$ are chosen as in Lemma~\ref{lem:amplitude_amplification}. Then
		\begin{equation}
			||\braket{\psi_1|\chi}|^2-1|\leq  4(\lfloor m^*\rfloor+1)\delta.
		\end{equation}
	\end{lem}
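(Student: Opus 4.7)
The plan is to track how much the approximate operators $P(\phi,\varphi)$ deviate from the exact operators $Q(\phi,\varphi)$ of Lemma~\ref{lem:amplitude_amplification} at each step, and then bootstrap this into a bound on the final state overlap. Let $V = \mathrm{span}(\ket{\psi_1},\ket{\psi_2}) = \mathrm{span}(\ket{\psi},\ket{\tilde\psi})$ denote the two-dimensional invariant subspace under $Q$. Any normalized $\ket{\phi}\in V$ expanded as $a\ket{\psi}+b\ket{\tilde\psi}$ satisfies $(T_1(\omega)-S_1(\omega))\ket{\phi} = b\varepsilon_1\ket{v}$ by Eqs.~\eqref{eq:approximate_unitaries}, hence $\|(T_1-S_1)\ket{\phi}\|\le\delta/2$; similarly, expanding in $\ket{\psi_1},\ket{\psi_2}$ gives $\|(T_2-S_2)\ket{\phi}\|\le\delta/2$.

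The first key step is to bound $\|(P-Q)\ket{\phi}\|$ for $\ket{\phi}\in V$. Writing $T_1T_2 = S_1S_2 + (T_1-S_1)S_2 + T_1(T_2-S_2)$ and using that $S_2\ket{\phi}\in V$ (so the first error term obeys the bound above) together with the unitarity of $T_1$ (so $\|T_1(T_2-S_2)\ket{\phi}\| = \|(T_2-S_2)\ket{\phi}\|\le\delta/2$), I obtain $\|(P-Q)\ket{\phi}\|\le\delta$. Note that the error output $T_1(T_2-S_2)\ket{\phi}$ need not lie in $V$, but this will not matter because I always compare the approximate trajectory to the exact one.

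Next, define $\ket{\chi_k}=P^k\ket{\psi}$ and $\ket{\Phi_k}=Q^k\ket{\psi}$ (with operators evaluated at the appropriate angles). Since $Q$ preserves $V$, $\ket{\Phi_k}\in V$ for all $k$. Using the unitarity of $P$, a one-line telescoping,
\begin{equation}
\|\ket{\chi_{k+1}}-\ket{\Phi_{k+1}}\| \le \|P(\ket{\chi_k}-\ket{\Phi_k})\| + \|(P-Q)\ket{\Phi_k}\| \le \|\ket{\chi_k}-\ket{\Phi_k}\| + \delta,
\end{equation}
gives by induction $\|\ket{\chi_k}-\ket{\Phi_k}\|\le k\delta$. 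Applying this to the final iterate (either $k=m^\ast$ or $k=\lfloor m^\ast\rfloor+1$ in the non-integer case, where one adjusts $Q\to P$ at the correction step and reruns the same telescoping argument), and invoking Lemma~\ref{lem:amplitude_amplification} to equate $\ket{\Phi_{\mathrm{final}}} = e^{i\theta}\ket{\psi_1}$ up to an irrelevant phase, yields $\|\ket{\chi}-e^{i\theta}\ket{\psi_1}\|\le(\lfloor m^\ast\rfloor+1)\delta$.

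Finally, this norm bound is converted to the stated overlap bound by the reverse triangle inequality: $|\braket{\psi_1|\chi}| \ge |e^{i\theta}| - |\braket{\psi_1|\chi-e^{i\theta}\psi_1}| \ge 1-(\lfloor m^\ast\rfloor+1)\delta$, so $||\braket{\psi_1|\chi}|^2-1|\le 2(\lfloor m^\ast\rfloor+1)\delta-[(\lfloor m^\ast\rfloor+1)\delta]^2$, which is safely below $4(\lfloor m^\ast\rfloor+1)\delta$. The main subtlety is the drift of $\ket{\chi_k}$ out of the invariant subspace $V$: this is handled not by projecting back, but by measuring error against the exact $V$-preserving trajectory $\ket{\Phi_k}$ and exploiting unitarity of $P$ to prevent amplification of previously accumulated error.
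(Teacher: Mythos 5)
Your proof is correct and follows essentially the same route as the paper's: a per-step bound on $\|(P-Q)\ket{\phi}\|$ for normalized $\ket{\phi}$ in the two-dimensional invariant subspace, telescoped along the exact trajectory using unitarity of $P$, and finished off with Lemma~\ref{lem:amplitude_amplification} and Cauchy--Schwarz. Your operator-level decomposition $T_1T_2-S_1S_2=(T_1-S_1)S_2+T_1(T_2-S_2)$ is a tidier version of the paper's explicit state-level computation of the error vectors, and in fact yields a per-step error of $\delta$ rather than the paper's $2\delta$, so the stated bound follows with room to spare.
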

	\begin{proof}
		First, note that
		\begin{align}
			P(\phi,\varphi) \ket{\psi_1}&=-e^{i\varphi}T_1(\phi)\ket{\psi_1}=-e^{i\varphi}T_1(\phi)\left[\zeta_1\ket{\psi} + \zeta_2 \ket{\tilde \psi} \right]=-\zeta_1e^{i\varphi}e^{i\phi}\ket{\psi} - \zeta_2 e^{i\varphi}\ket{\tilde \psi}-\zeta_2e^{i\varphi}\varepsilon_1\ket{v}\nonumber\\
			&=Q(\phi,\varphi)\ket{\psi_1} +\delta_1 \ket{v}\,,
		\end{align}
		where $|\delta_1|\leq \delta/2<\delta$, while $Q$ is defined in~\eqref{eq:def_q}. Here we introduced the coefficients $\zeta_1=\braket{\psi|\psi_1}$, $\zeta_2=\braket{\tilde\psi|\psi_1}$. Similarly, 
		\begin{align}
			P(\phi,\varphi) \ket{\psi_2}&=-T_1(\phi)(\ket{\psi_2}+\varepsilon_2\ket{w})=-T_1(\phi) (\xi_1\ket{\psi}+\xi_2\ket{\tilde \psi})-\varepsilon_2T_1(\phi)\ket{w}\nonumber\\
			&=-\xi_1e^{i\phi}\ket{\psi}-\xi_2\ket{\tilde \psi}-\varepsilon_1\xi_2\ket{v}-\varepsilon_2T_1(\phi)\ket{w}=Q(\phi,\varphi)\ket{\psi_2}+\delta_2\ket{u}\,.
		\end{align}
		Here $\ket{u}$ is a normalized vector, while
		\begin{equation}
			|\delta_2|\leq(|\varepsilon_1\xi_2|^2+|\varepsilon_2|^2+2|\varepsilon_1\varepsilon_2 \xi_2 \braket{v|T_1(\phi)|w}|)^{1/2}\leq \delta\,.
		\end{equation}
		Therefore, we have
		\begin{equation}
			\prod_{j=1}^nP(\phi_j,\varphi_j)\ket{\psi}=\prod_{j=1}^nQ(\phi_j,\varphi_j)\ket{\psi}+\sum_{k=1}^n\left[ c_k\prod_{j=1}^{k-1}P(\phi_j,\varphi_j)\ket{u}+d_k\prod_{j=1}^{k-1}P(\phi_j,\varphi_j)\ket{v}\right]\,,
		\end{equation}
		where $|c_j|\leq \delta$, $|d_j|\leq \delta$. The statement then follows immediately using Lemma~\ref{lem:amplitude_amplification} and that $||\braket{\psi_1|\chi}|^2-1|\leq 2 ||\braket{\psi_1|\chi}|-1|$.
	\end{proof}
	
	We will also use the following result
	\begin{lem}\label{lem:partial_sign}
		Consider the unitary operation defined by
		\begin{equation}\label{eq:partial_sign_flip}
			F_\varphi^{[\ell; m]}\ket{i_1\ldots i_N}=e^{i\varphi f_{\ell, m}(i_1\ldots i_N)}\ket{i_1\ldots i_N}\,,
		\end{equation} 
		where
		\begin{equation}
			f_{\ell, m}(i_1\ldots i_N)=
			\begin{cases}
				1 & {\rm if\ }(\sum_{j=1}^N i_j)-m\equiv 0 \ ({\rm mod}\ 2^\ell)\\
				0 & {\rm otherwise}\,.
			\end{cases}
		\end{equation}
		Then, $F_\varphi^{[\ell; m]}$ can be implemented using $\ell$ total ancillas and a circuit of depth $O(\ell^2)$.
	\end{lem}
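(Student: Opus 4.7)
The plan is to realize $F_\varphi^{[\ell;m]}=\exp\!\bigl(i\varphi\,\Pi_m^\ell\bigr)$ via a compute–uncompute (phase-kickback) sandwich wrapped around the phase-estimation circuit of Result~\ref{lem:new_pi}. Let $G$ denote the unitary implemented by that circuit, modified so that the controlled operations are taken with $U^{(0)}=\openone$ and $U^{(1)}(x)=e^{2\pi i(N_e-m)/2^x}$; the extra factor $e^{-2\pi i m/2^x}$ is a mere scalar that can be absorbed into a single-qubit phase on ancilla $x$, so the tensor-product form required by Result~\ref{lem_v} is preserved. Then $G$ maps $|\psi\rangle|0\rangle^{\otimes\ell}\mapsto \sum_j (\Pi_{j+m}^\ell|\psi\rangle)\,|j\rangle$, and a one-line calculation using $\Pi_j^\ell\Pi_k^\ell=\delta_{j,k}\Pi_j^\ell$ shows that $G^{\dagger}\bigl(\openone\otimes e^{i\varphi(|0\rangle\!\langle 0|)^{\otimes\ell}}\bigr)G$ coincides with $F_\varphi^{[\ell;m]}\otimes\openone_{\mathrm{anc}}$, the ancillas returning to $|0\rangle^{\otimes\ell}$ identically in $|\psi\rangle$.

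The implementation then proceeds in three stages. (i) Apply $G$: the $\ell$ controlled-$V$ blocks of Fig.~\ref{fig:measurement} must be composed sequentially because they share the system register, yielding depth $\ell\cdot O(1)=O(\ell)$ by Result~\ref{lem_v}, followed by an inverse QFT on the $\ell$ ancillas of depth $O(\ell)$~\cite{maslov2007linear}. (ii) Apply the $\ell$-qubit diagonal phase $e^{i\varphi(|0\rangle\!\langle 0|)^{\otimes\ell}}$ on the ancilla register alone; this $\ell$-controlled phase admits a standard ancilla-free implementation in depth $O(\ell)$. (iii) Apply $G^{\dagger}$, again of depth $O(\ell)$. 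The ancilla budget stays at $\ell$ because the single auxiliary qubit used inside each $V$ lives on a system site, is freed at the end of that $V$, and can be recycled by the next. The total depth is $O(\ell)$, safely inside the claimed $O(\ell^2)$ bound (the quadratic bound conveniently absorbs any sub-optimal realization of the QFT or of the multi-controlled phase).

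I expect the only non-trivial point to check to be a correctness one rather than an algorithmic obstruction. Since each $V$ in Result~\ref{lem_v} is implemented by LOCC with classical feedforward, one must confirm that the full sandwich is a genuine unitary on the system register and not a disguised measurement. This is automatic because every $V$ is a deterministic unitary, so $G$ and $G^{\dagger}$ are deterministic unitaries; the diagonal phase in between commutes with the ancilla basis used by $G$; and therefore the composition acts trivially on the ancilla sector and leaves on the system precisely $F_\varphi^{[\ell;m]}$. A second minor point is that the coherent phase estimation is exact even though $|\psi\rangle$ need not be an eigenstate of $N_e$ — this is fine because $N_e$ has integer spectrum and $\ell$-bit phase estimation of $(N_e-m)/2^\ell$ is sharp on each eigenspace, giving the clean entanglement structure used above.
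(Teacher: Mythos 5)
Your proof is correct and follows essentially the same route as the paper's: a compute--uncompute sandwich of the (shifted) excitation-counting circuit of Result~\ref{lem:new_pi} around an $\ell$-qubit controlled phase acting only on the ancilla register. The one cosmetic difference is that the paper realizes that multi-controlled phase ancilla-free in depth $O(\ell^2)$ via the standard Barenco-type decomposition rather than asserting an $O(\ell)$-depth construction, but, as you note, either choice stays within the claimed $O(\ell^2)$ bound.
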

	\begin{proof}
		We attach all $\ell$ ancillas to the first site. We prepare them in the state $|+\rangle$, and apply to each of them, sequentially, the unitary operation $V$ in Eq.~\eqref{eq:unitary} with  $U_0=\openone$ and $U_1 =e^{i \pi (N-m) /2^x}$, where $x=1,\ldots,\ell$ corresponding to each ancilla. This can be done by a circuit of depth $O(\ell)$. After that, we apply an inverse QFT to the $\ell$ ancillas, which requires depth $D = O(\ell)$~\cite{maslov2007linear}. This transforms an input state $\ket{\psi}$ as
		\be
		\label{eq:OpforAA}
		W:|+\ldots +\rangle \otimes |\psi\rangle \to \sum_{i_1,\ldots, i_\ell=0}^1 |i_1,\ldots,i_\ell\rangle \otimes \Pi^\ell_{i+m} |\psi\rangle
		\ee
		where $\Pi^\ell_{i+m}$ is the projector onto the subspace with a number of excitations $e$ (namely, a number of values for which $i_k=1$) satisfying $e\equiv i+m$ mod$(2^\ell)$, namely $e-m\equiv i$ mod$(2^\ell)$, and where $i_1\ldots i_\ell$ is the binary decomposition of $i$. We can now apply a unitary to the ancillas in the first site mapping $\ket{0\ldots 0}\mapsto e^{i\varphi}\ket{0\ldots 0}$ and acting as the identity on the other basis states. This operation can be implemented by a local circuit of depth $O(\ell^2)$~\cite{barenco1995elementary}.
		We can finally apply the inverse $W^\dagger$ of the unitary~\eqref{eq:OpforAA}, yielding the desired result. 
	\end{proof}
	
	Finally, we prove our main result of this section.
	
	\begin{prop}[Preparation of Dicke states]
		Let $N\geq 4M$ and $M\geq 1$. For any $0<\delta<1$, there exists an efficient preparation protocol to realize a state $\ket{\Phi}$ such that
		\begin{equation}
			||\braket{\Phi| W_N(M)}|^2-1|\leq 4\delta\,.
		\end{equation}
		The protocol applies a sequence of $2n_M$ unitary operators  which are either $F_\omega^{[\ell,0]}V^\dagger$ or  $F_\omega^{[\ell,M]}$, where $V=e^{-i \theta S_y}$ is a product of local unitaries (which can be implemented in parallel), 
		\begin{align}
			n_M\leq \frac{\pi  (8\pi M)^{1/4}}{2}\,,
		\end{align}
		while 
		\begin{align}\label{eq:def_l_final_prop}
			\ell = \log_2\left\{\frac{1}{\ln (4/3)}\left[ 2M(\ln 2M+9/2)+\ln \left(\frac{{\rm Poly}(M)}{\delta^2}\right)	\right] \right\}\,,
		\end{align}
		with
		\begin{equation}
			{\rm Poly}(M)=\frac{8\pi e^2(8\pi M)^{1/2}}{M\ln (4/3)\left(1-[8/(3\pi M)]^{1/2}\right)}\,.
		\end{equation}
	\end{prop}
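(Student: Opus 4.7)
The plan is to recognize the target Dicke state as the $|\psi_1\rangle$ component of the decomposition $|\Psi(p)\rangle = \sin\alpha\,|W_N(M)\rangle + \cos\alpha\,|R\rangle$ with $p=M/N$, and to obtain it deterministically via the approximate amplitude-amplification machinery of Lemma~\ref{lem:approximate_amplitude_amplification}. The Stirling bound~\eqref{eq:stirling_inequality}, already exploited in Prop.~\ref{prop:dicke_improved}, gives $\sin^2\alpha = C_{N,M} \geq 1/\sqrt{8\pi M}$, hence $\alpha \geq (8\pi M)^{-1/4}$ via $\arcsin x \geq x$. Feeding this into $m^* = \pi/(4\alpha)-1/2$ from Lemma~\ref{lem:amplitude_amplification}, and noting that each Grover iteration contributes one unitary of each of the two listed types to the sequence, I would read off $n_M \leq \lfloor m^*\rfloor + 1 \leq \pi(8\pi M)^{1/4}/4 + 1/2 \leq \pi(8\pi M)^{1/4}/2$ for every $M\geq 1$.

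With $V = e^{-i\theta S_y}$ tuned so that $V|0\ldots 0\rangle = |\Psi(p)\rangle$, I would take the approximate reflections
\begin{equation*}
T_1(\omega) = V\,F_\omega^{[\ell,0]}\,V^\dagger, \qquad T_2(\omega) = F_\omega^{[\ell,M]},
\end{equation*}
implementable with the cost guaranteed by Lemma~\ref{lem:partial_sign}. These act exactly as the ideal reflections on the target states: $T_1(\omega)|\Psi(p)\rangle = e^{i\omega}|\Psi(p)\rangle$ because $V^\dagger|\Psi(p)\rangle = |0\ldots 0\rangle$ sits in the $N_e\equiv 0\pmod{2^\ell}$ eigenspace, and $T_2(\omega)|W_N(M)\rangle = e^{i\omega}|W_N(M)\rangle$ because $|W_N(M)\rangle$ is an $N_e=M$ eigenstate. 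The deviations $\varepsilon_1,\varepsilon_2$ appearing in Eqs.~\eqref{eq:approximate_unitaries} come exclusively from the aliased equivalence classes $N_e \in \{2^\ell, 2\cdot 2^\ell,\ldots\}$ (for $T_1$ acting on $|\tilde\Psi\rangle$) and $N_e \in \{M+2^\ell, M+2\cdot 2^\ell,\ldots\}$ (for $T_2$ acting on $|R\rangle$); the choice $2^\ell > M$ kills the negative shifts.

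The central technical step is bounding $\varepsilon_1,\varepsilon_2$. For $\varepsilon_2$, the squared error norm equals (up to $|1-e^{i\omega}|^2$ and a $1/\cos^2\alpha$ factor) the binomial right-tail $\sum_{k\geq 1}\Pr[N_e = M+k\,2^\ell]$, controlled uniformly in $N$ by the relative-entropy/Chernoff argument of Eqs.~\eqref{eq:useful_id1}--\eqref{eq:derivative_3}, which yields an exponential decay in $2^\ell$ with rate at least $\ln(4/3)$ once $2^\ell \gtrsim M$. For $\varepsilon_1$, I would expand $V^\dagger|\tilde\Psi\rangle$ in the symmetric Dicke basis $\{|W_N(m)\rangle\}_m$ using the Wigner rotation action of $V$ on the spin-$N/2$ subspace, exploit $\langle 0\ldots 0|V^\dagger|\tilde\Psi\rangle = 0$ to eliminate the $N_e=0$ contribution, and bound the residual tail on $N_e \geq 2^\ell$ by a second Chernoff-type estimate. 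The accompanying $\mathrm{Poly}(M)$ prefactor absorbs the Stirling constant $\sqrt{8\pi M}$, the $1/\cos^2\alpha$ normalization, and a few geometric factors (the explicit form quoted in the proposition).

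Finally, Lemma~\ref{lem:approximate_amplitude_amplification} bounds the global infidelity by $4(\lfloor m^*\rfloor+1)\max(\varepsilon_1,\varepsilon_2)$. Setting this to $\leq 4\delta$ and inverting the exponential tail gives $2^\ell \geq \ln(4/3)^{-1}\bigl[2M(\ln 2M + 9/2) + \ln(\mathrm{Poly}(M)/\delta^2)\bigr]$, which is precisely Eq.~\eqref{eq:def_l_final_prop}: the first bracket absorbs the logarithm of the Stirling/binomial factor (with the $9/2$ collecting standard Euler-type constants), while the second distributes the error budget $\delta^2/\mathrm{Poly}(M)$ over the $O(M^{1/4})$ iterations. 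The hardest step is the Chernoff bound on $V^\dagger|\tilde\Psi\rangle$: unlike in Prop.~\ref{prop:dicke_improved}, the distribution of $N_e$ is no longer directly binomial, and one must combine the Wigner-rotation structure of the symmetric subspace with a large-deviation argument whose rate constant remains independent of $N$.
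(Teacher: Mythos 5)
Your proposal reproduces the paper's architecture essentially step for step: the decomposition $\ket{\Psi(p)}=\sin\alpha\ket{W_N(M)}+\cos\alpha\ket{R}$ with $\sin^2\alpha=C_{N,M}\geq (8\pi M)^{-1/2}$, the approximate reflections built from $F_\omega^{[\ell,0]}$ and $F_\omega^{[\ell,M]}$ via Lemma~\ref{lem:partial_sign}, the iteration count $n_M\leq \pi(8\pi M)^{1/4}/2$, the Chernoff/relative-entropy control of $\varepsilon_2$ inherited from Prop.~\ref{prop:dicke_improved}, and the error budget $4(\lfloor m^*\rfloor+1)\cdot O(\delta/M^{1/4})\leq 4\delta$ fed into Lemma~\ref{lem:approximate_amplitude_amplification}. (Your conjugated form $T_1=VF_\omega^{[\ell,0]}V^\dagger$ is in fact the one needed for $T_1\ket{\theta}=e^{i\omega}\ket{\theta}$; the paper's displayed $T_1=F_\omega^{[\ell,0]}V^\dagger$ omits the re-application of $V$, which its own subsequent manipulations implicitly assume.)

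The genuine gap is the bound on $\varepsilon_1$, which you correctly flag as the hardest step but then dispatch with ``a second Chernoff-type estimate.'' This does not go through as stated: after expanding $V^\dagger\ket{W_N(M)}=\sum_s c_s\ket{W_N(s)}$, the amplitudes $c_s$ are \emph{signed, interfering} sums over the splitting $s=e+f$ of excitations coming from the $M$ rotated-up and $N-M$ rotated-down spins, so $|c_s|^2$ is not the tail of any product (binomial) distribution and a large-deviation bound cannot be applied directly. The paper's Section on technical computations instead bounds $|c_s|$ by the absolute-value sum, controls the binomial factors with $\binom{N-M}{s-e}\leq (N-M)^{s-e}/(s-M)!$ and Stirling, and then uses monotonicity of the logarithmic derivative of the resulting envelope (for $s\geq 3M$, $N\geq 4M$) to extract the decay $|c_s|^2\leq \mathrm{Poly}(M)\exp[-s\ln(4/3)+2M(\ln 2M+9/2)]$. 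Every constant in Eq.~\eqref{eq:def_l_final_prop} --- the rate $\ln(4/3)$, the offset $2M(\ln 2M+9/2)$, and the explicit $\mathrm{Poly}(M)$ --- originates in that computation, so without it you are asserting the quantitative form of $\ell$ rather than deriving it. To close the gap you would need either to carry out this explicit envelope argument or to supply a genuinely different tail bound on $V^\dagger\ket{\tilde\theta}$ in the excitation-number basis that is uniform in $N$.
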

	\begin{proof}
		Define
		\begin{equation}
			\ket{\theta}=V\ket{0}^{\otimes N}\,,  
		\end{equation}
		$V=e^{-i \theta S_y}$ with $\cos(\theta)=\sqrt{1-p}$ and
		\begin{equation}
			p=M/N\,.
		\end{equation}
		We start with the identity
		\begin{equation}
			\ket{\theta}=\sin \alpha \ket{W(M)}+\cos\alpha \ket{R}\,.
		\end{equation}
		Here, $\ket{W(M)}$ is the normalized Dicke state with $M$ excitations, while
		\begin{align}
			\ket{R}&=\frac{1}{Z_R}\sum_{e\neq M}\left[\binom{N}{e}p^e(1-p)^{N-e}\right]^{1/2}\ket{W(e)}\,,
		\end{align}
		where $Z_R$ is a normalization factor, and
		\begin{equation}\label{eq:sin_a_def}
			\sin \alpha=\left[{N\choose M} p^M (1-p)^{N-M}\right]^{1/2}\,.
		\end{equation}
		Choose $\ell$ as in~\eqref{eq:def_l_final_prop} and define
		\begin{equation}
			T_1(\omega)=F_\omega^{[\ell,0]}V^\dagger \qquad T_2(\omega) = F_\omega^{[\ell,M]}\,.
		\end{equation}
		Note that $V$ (and hence $V^\dagger$) is a product of local unitaries, while $F_\omega^{[\ell,m]}$ can be implemented efficiently thanks to Lemma~\ref{lem:partial_sign}.  In the following, we will show
		\begin{subequations}\label{eq:to_prove_approx}
			\begin{align}
				T_1(\omega)\ket{\theta}&=e^{i\omega}\ket{\theta}\,, \qquad T_1(\omega)\ket{\tilde \theta}=\ket{\tilde \theta}+\varepsilon_1\ket{v}\,,\label{eq:identity_t1}\\
				T_2(\omega)\ket{W(M)}&=e^{i\omega}\ket{W(M)}\,, \qquad T_2(\omega)\ket{R}=\ket{R}+\varepsilon_2\ket{w}\,,
			\end{align}
		\end{subequations}
		where we denoted by $\ket{\tilde{\theta}}$ the state orthogonal to $\ket{\theta}$ generated by $\ket{W(M)}$ and $\ket{R}$, while $\ket{v}$, $\ket{w}$ are normalized states, with
		\begin{equation}\label{eq:inequality_assumed}
			|\varepsilon_1|, |\varepsilon_2|\leq \frac{\delta}{\pi (8\pi M)^{1/4}}\,.
		\end{equation}
		Combining Lemmas~\ref{lem:amplitude_amplification} and ~\ref{lem:approximate_amplitude_amplification}, we see that this is enough to prove the statement. Indeed, if~\eqref{eq:inequality_assumed} holds, we can implement the approximate amplitude amplification algorithm applying $T_1(\omega)$ and $T_2(\omega)$ a number of times
		\begin{align}
			n_M\leq \frac{\pi}{4\alpha}+\frac{1}{2}\leq \frac{\pi  (8\pi M)^{1/4}}{4}+\frac{1}{2}\leq \frac{\pi  (8\pi M)^{1/4}}{2}\,,
		\end{align}
		where we used $\sin\alpha \leq \alpha$ for $0\leq \alpha \leq 1$, and $(\sin \alpha)^{-1} \leq (8\pi M)^{1/4}$ [which follows from Eqs.~\eqref{eq:stirling_inequality} and \eqref{eq:sin_a_def}]. By Lemma~\ref{lem:approximate_amplitude_amplification}, this gives us the desired state $\ket{W(M)}$ up to an infidelity $I=\varepsilon$ with
		\begin{equation}
			\varepsilon\leq  4\left(\frac{\pi}{4\alpha}+\frac{1}{2}\right) \frac{2\delta}{\pi (8\pi M)^{1/4}}\leq 4\delta\,.
		\end{equation}
		
		Let us prove~\eqref{eq:to_prove_approx}, starting with the action of $T_2(\omega)$. First, it is obvious that $T_2(\omega)\ket{W(M)}=e^{i\omega}\ket{W(M)}$. Next, we have
		\begin{align}
			T_2(\omega)\ket{R}&=\frac{1}{Z_R}T_2(\omega)\left[\sum_{e\in \mathcal{T}_M^{\ell}\setminus\{M\} }\left[\binom{N}{e}p^e(1-p)^{N-e}\right]^{1/2}\ket{W(e)}+\sum_{j\neq 0}\sum_{e\in \mathcal{T}_{M+j}^{\ell} }\left[\binom{N}{e}p^e(1-p)^{N-e}\right]^{1/2}\ket{W(e)}\right]\nonumber\\
			&=\frac{1}{Z_R}\left[e^{i\omega}\sum_{e\in \mathcal{T}_M^{\ell}\setminus\{M\} }\left[\binom{N}{e}p^e(1-p)^{N-e}\right]^{1/2}\ket{W(e)}+\sum_{j\neq 0}\sum_{e\in \mathcal{T}_{M+j}^{\ell} }\left[\binom{N}{e}p^e(1-p)^{N-e}\right]^{1/2}\ket{W(e)}\right]\nonumber\\
			&=\ket{R}+\ket{w}\,.
		\end{align}
		where $\mathcal{T}^{\ell}_M$ is defined in Eq.~\eqref{eq:def_taux}, while
		\begin{equation}
			\ket{w}=\frac{(e^{i\omega}-1)}{Z_R}\sum_{e\in \mathcal{T}_M^{\ell}\setminus\{M\} }\left[\binom{N}{e}p^e(1-p)^{N-e}\right]^{1/2}\ket{W(e)}\,.
		\end{equation}
		We can bound the norm of $\ket{w}$ using
		\begin{equation}
			Z^2_R=1-\binom{N}{M}p^M(1-p)^{N-M}\,,
		\end{equation}
		and the results of Appendix~\ref{app:preparation_dicke}, cf. Eqs.~\eqref{eq:useful_id1},~\eqref{eq:useful_id2}. Using $N\geq 4M$, we obtain 
		\begin{equation}
			\braket{w|w}\leq 4\frac{e^{-2^{\ell-1}}}{Z_R^2}\leq \frac{4e^{-2^{\ell-1}}}{1-\left(\frac{8}{3\pi M}\right)^{1/2}}\leq \frac{\delta^2}{\pi^{2} (8\pi M)^{1/2}}\,,
		\end{equation}
		This inequality holds if
		\begin{equation}
			\ell \geq 1+\log_2\left\{\ln \left[\frac{4\pi^2(8\pi M)^{1/2}}{\delta^2\left(1-\left(\frac{8}{3\pi M}\right)^{1/2}\right)}\right]\right\},
		\end{equation}
		which is true if $\ell$ is chosen as in~\eqref{eq:def_l_final_prop} (this is easily established with the help of numerical inspection).
		
		Next, let us consider $T_1(\omega)$. Again, it is obvious that $T_1(\omega)\ket{\theta}=e^{i\omega} \ket{\theta}$. To prove the second identity in Eq.~\eqref{eq:identity_t1}, we start from
		\begin{equation}
			V^\dagger\ket{\tilde{\theta}}=\frac{1}{Z}V^\dagger (1- \ket{\theta}\bra{\theta})\ket{W(M)}\,,
		\end{equation}
		where
		\begin{equation}
			Z^2=1-|\braket{W(M)|\theta}|^2\,.
		\end{equation}
		It follows from the results of Appendix~\ref{sec:technical_computations} that
		\begin{equation}
			V^\dagger \ket{W(M)}=\sum_{s=0}^Nc_s\ket{W(s)}\,,
		\end{equation}
		cf. Eq.~\eqref{eq:sector_dec}. Therefore, we can write
		\begin{equation}
			V^\dagger\ket{\tilde{\theta}}=\frac{1}{Z}\left(\sum_{s=0}^{2^{\ell}}c_s\ket{W(s)}+\braket{\theta|W(M)}\ket{0}^{\otimes N} \right)+\ket{\tilde{w}}\,,
		\end{equation}
		where $\ket{\tilde{w}}$ has more than $2^{\ell}$ excitations. Using the results of Sec.~\ref{sec:technical_computations} and $N\geq 4M$, we can bound its norm as
		\begin{align}
			\braket{\tilde w|\tilde{w}}&=\frac{1}{Z^2} \sum_{s\geq 2^\ell+1}|c_s|^2\leq \frac{1}{1-[8/(3\pi M)]^{1/2}}\frac{2e^2}{M\pi\ln (4/3)}\exp\left[-2^{\ell}\ln(4/3)+2M(\ln(2M)+9/2)\right]\nonumber\\
			&\leq \frac{\delta^2}{4\pi^2(8\pi M)^{1/2}}\,,
		\end{align}
		where we used that
		\begin{equation}
			2^{\ell}\geq \frac{1}{\ln (4/3)}\left[ 2M(\ln 2M+9/2)+\ln \left(\frac{{\rm Poly}(M)}{\delta^2}\right)	\right]\,,
		\end{equation}
		with
		\begin{equation}
			{\rm Poly}(M)=\frac{8\pi e^2(8\pi M)^{1/2}}{M\ln (4/3)\left(1-[8/(3\pi M)]^{1/2}\right)}\,.
		\end{equation}
		Now, in the space generated by states with at most $2^{\ell}$ excitations, $F^{[\ell,0]}_\omega$ only multiplies the phase $e^{i\omega}$ to the state $\ket{0}^{\otimes N}$, leaving the rest of the basis states invariant. Therefore, we arrive at the final result
		\begin{equation}
			T_1(\omega)]\ket{\tilde \theta}=
			F^{[\ell,0]}_\omega V^\dagger \ket{\tilde{\theta}}= \ket{\tilde{\theta}}+
			\ket{v}\,,
		\end{equation}
		where $\ket{v}=(-\openone + F^{[\ell,0]}_\omega \ket{\tilde w})$, and therefore
		\begin{equation}
			\braket{v|v}\leq \frac{\delta^2}{\pi^2(8\pi M)^{1/2}}\,.
		\end{equation}
	\end{proof}
	
	\section{Technical computations}
	\label{sec:technical_computations}
	
	The goal of this section is to analyze the state
	\begin{equation}
		V^\dagger \ket{W(M)}	
	\end{equation}
	where $\ket{W(M)}$ is the normalized Dicke state with $M$ excitations, while $V=e^{-i \theta S_y}$ with $\cos(\theta)=\sqrt{1-p}$ and $p=M/N$. Throughout this section, we will assume $N\geq 4M$.

	We start by introducing the unnormalized Dicke states
	\begin{align}
		\ket{U(M)}&=\sum_{i_1<\ldots <i_M}\sigma^+_{i_1}\cdots \sigma^+_{i_M}\ket{0}^{\otimes N}\,,
	\end{align}
	and note that we can also write
	\begin{equation}
		\ket{W(M)}=\frac{1}{\sqrt{\binom{N}{M}}}\frac{1}{(N-M)!M!}\sum_{\pi\in S_N}\ket{\underbrace{1\ldots 1}_{M}\underbrace{0\ldots 0}_{N-M}}\,,
	\end{equation}
	where the sum is over all permutations of qubits. Therefore, we can compute
	\begin{align}
		V^\dagger \ket{W(M)}&=\frac{1}{\sqrt{\binom{N}{M}}}\frac{1}{(N-M)!M!}\sum_{\pi\in S_N}e^{+iS_y\theta}\ket{\underbrace{1\ldots 1}_{M}\underbrace{0\ldots 0}_{N-M}}\nonumber\\
		&=\frac{1}{\sqrt{\binom{N}{M}}}\frac{1}{(N-M)!M!}\sum_{\pi\in S_N}(\sqrt{p}\ket{0}+\sqrt{1-p}\ket{1})^{\otimes M}(\sqrt{1-p}\ket{0}-\sqrt{p}\ket{1})^{\otimes (N-M)}
	\end{align}
	We can rewrite
	\begin{align}
		\sum_{\pi\in S_N}&(\sqrt{p}\ket{0}+\sqrt{1-p}\ket{1})^{\otimes M}(\sqrt{1-p}\ket{0}-\sqrt{p}\ket{1})^{\otimes (N-M)}\nonumber\\
		=&\sum_{\pi\in S_N}\left[\sum_{e=0}^M p^{(M-e)/2}(1-p)^{e/2} \ket{U(e)}\right]\left[\sum_{f=0}^{N-M} (1-p)^{(N-M-f)/2}(-p^{1/2})^{f} \ket{U(f)}\right]\nonumber\\
		=&\sum_{e=0}^M \sum_{f=0}^{N-M} p^{(M-e)/2}(1-p)^{e/2}  (1-p)^{(N-M-f)/2}(-p^{1/2})^{f} \sum_{\pi\in S_N}\ket{U(e)} \ket{U(f)}
	\end{align}
	Next, we use
	\begin{align}
		\sum_{\pi\in S_N}\ket{U(e)} \ket{U(f)}&=\binom{M}{e}\binom{N-M}{f}\sum_{\pi\in S_N}\ket{\underbrace{1\ldots 1}_{e+f}\underbrace{0\ldots 0}_{N-e-f}}\nonumber\\
		=&\binom{M}{e}\binom{N-M}{f}\sqrt{\binom{N}{e+f}}(e+f)!(N-e-f)!\ket{W(e+f)}
	\end{align}
	Introducing the variable $s=e+f$, we finally get
	\begin{align}
		V^\dagger \ket{W(M)}&=
		\sum_{s=0}^N\ket{W(s)}\frac{\sqrt{s!(N-s)!}}{\sqrt{M!(N-M)!}}\nonumber\\
		&\times\sum_{e=0}^M \sum_{f=0}^{N-M} \delta_{f+e,s}\binom{M}{e}\binom{N-M}{f}p^{(M-e)/2}(1-p)^{e/2}  (1-p)^{(N-M-f)/2}(-p^{1/2})^{f}\,.
	\end{align}
	Therefore,
	\begin{equation}\label{eq:sector_dec}
		V^\dagger \ket{W(M)}=\sum_{s=0}^Nc_s\ket{W(s)}\,,
	\end{equation}
	where
	\begin{equation}
		c_s=\frac{\sqrt{s!(N-s)!}}{\sqrt{M!(N-M)!}} \times\sum_{e=0}^M\binom{M}{e}\binom{N-M}{s-e}p^{(M-e)/2}(1-p)^{e/2}  (1-p)^{(N-M-(s-e))/2}(-p^{1/2})^{s-e}\,.
	\end{equation}
	Next, we bound $|c_s|$ for $s\geq 3M$. We have
	\begin{equation}
		|c_s|\leq \frac{\sqrt{s!(N-s)!}}{\sqrt{M!(N-M)!}} \sum_{e=0}^M\binom{M}{e}\binom{N-M}{s-e}p^{(M+s-2e)/2} (1-p)^{(N-M-(s-2e))/2}\,.
	\end{equation}
	Using
	\begin{equation}
		\binom{N-M}{s-e}\leq \frac{(N-M)^{s-e}}{(s-M)!}\,,
	\end{equation}
	we obtain
	\begin{align}
		|c_s|&\leq \frac{\sqrt{s!(N-s)!}}{\sqrt{M!(N-M)!}}\frac{1}{(s-M)!}p^{(s-M)/2}\frac{(N-M)^s}{(N-M)^M}(1-p)^{(N-M-s)/2} \sum_{e=0}^M\binom{M}{e}(N-M)^{M-e}p^{M-e} (1-p)^{e}\nonumber\\
		&= \frac{\sqrt{s!(N-s)!}}{\sqrt{M!(N-M)!}}\frac{1}{(s-M)!}p^{(s-M)/2}\frac{(N-M)^s}{(N-M)^M}(1-p)^{(N-M-s)/2} \left[(M+1)\left(1-\frac{M}{N}\right)\right]^M\,.
	\end{align}
	Taking the square, using Stirling's inequality $\sqrt{2 \pi n}\left(\frac{n}{e}\right)^n <n !<\sqrt{2 \pi n}\left(\frac{n}{e}\right)^n e^{\frac{1}{12 n}}$ and rearranging, we arrive at
	\begin{equation}
		|c_s|^2\leq\left(1+\frac{1}{M}\right)^{2M}\frac{1}{\pi}\left(\frac{s(N-s)}{M(N-M)(s-M)^2)}\right)^{1/2}\left(\frac{sM(N-M)}{N-s}\right)^s\left(1-\frac{s}{N}\right)^N \exp\left[2 (s - M) (1 - \ln(s - M)\right]\,.
	\end{equation}
	Now we use that for $s\geq 3M$ one has
	\begin{equation}
		\left(\frac{sM(N-M)}{N-s}\right)^s\left(1-\frac{s}{N}\right)^N \exp\left[2 (s - M) (1 - \ln(s - M)\right]\leq e^{M}	(2M)^{2M}\left[\frac{3(N-M)}{4(N-3M)}\right]^s\,.
	\end{equation}
	This inequality can be established as follows. Denoting the lhs by $g(s)$, we note that the logarithmic derivative $d\ln g(s)/ds$ is a monotonically decreasing function of $s$. Therefore, for $s\geq 3M$, we have $d\ln g(s)/ds\leq d\ln g(s)/ds|_{s=3M}=:\Gamma$. This implies $g(s)\leq g(3M) e^{s\Gamma}$, from which the above inequality follows. Finally, we have
	\begin{equation}
		\left(\frac{s(N-s)}{M(N-M)(s-M)^2)}\right)^{1/2}=	\left[\frac{s}{M}\left(\frac{1}{(s-M)^2}-\frac{1}{(s-M)(N-M)}\right)\right]^{1/2}\leq \frac{2}{M}\,,
	\end{equation}
	and also
	\begin{equation}
		\left[\frac{(N-M)}{(N-3M)}\right]^s=\left(1+\frac{2M}{N-3M}\right)^s\leq \exp\left[\frac{2M s }{N-3M}\right]\leq e^{8M}\,,
	\end{equation}
	where we used $N\geq s$ and $N\geq 4M$. Putting all together, we obtain
	\begin{equation}
		|c_s|^2\leq \frac{2e^2}{\pi M}e^{9M}(2M)^{2M}\left[\frac{3}{4}\right]^s={\rm Poly}(M) \exp\left[-s\ln(4/3)+2M(\ln(2M)+9/2)\right]\,.
	\end{equation}
	Therefore, we arrive at the final result
	\begin{equation}
		\sum_{k\geq s}|c_k|^2\leq \int_{s-1}^{\infty} dk |c_k|^2=\frac{2e^2}{M\pi\ln (4/3)}\exp\left[-(s-1)\ln(4/3)+2M(\ln(2M)+9/2)\right]\,.
	\end{equation}

	\section{Eigenstates of the XX chain}
	We consider the XX Hamiltonian with open boundary conditions
	\begin{equation}\label{eq:xx_hamiltonian}
		H=-\sum_{k=1}^{N-1} (\sigma^x_k\sigma^x_k+\sigma^y_k\sigma^y_k)\,.
	\end{equation}
	Introducing the fermionic modes via the Jordan-Wigner mapping
	\begin{align}
		a_k&=\left(\prod_{j=1}^{k-1} \sigma^z_j\right)\sigma^-_k\,,\qquad 
		a^\dagger_k=\left(\prod_{j=1}^{k-1} \sigma^z_j\right)\sigma^+_k\,,
	\end{align}
	the Hamiltonian~\eqref{eq:xx_hamiltonian} can be rewritten as
	\begin{equation}
		H=-\sum_{k=1}^{N-1} (a^\dagger_j a_{j+1}+ {\rm h.c.})\,.
	\end{equation}
	Note that
	\begin{equation}
		\{a^\dagger_j,a_k\}=\delta_{j,k}\,.
	\end{equation}
	Based on this mapping, a standard result states that the eigenstates of the model read
	\begin{equation}
		\ket{\Psi_M}=A^\dagger_M \cdots A^\dagger_1 \ket{0}^{\otimes N}
	\end{equation}
	with
	\begin{equation}
		A_\alpha=\sum_{k=1}^N c^{\alpha}_k \left(\prod_{j=1}^{k-1} \sigma^z_j\right)\sigma_k^-\,,
	\end{equation}
	where $\{c^{\alpha}_k\}$ are pairwise distinct sets of numerical coefficients, $\{c^\alpha_k\}\neq \{c^\beta_{k}\}$. These operators satisfy the canonical anticommutation relations
	\begin{equation}\label{eq:anticommutations}
		\{A_\alpha, A_\beta\}=0\,, \qquad \{A^\dagger_\alpha,A_\beta\}=\delta_{\alpha,\beta}\,,
	\end{equation}
	and so one has the constraint
	\begin{equation}
		\sum_{j=1}^N\overline{c_{j}^\alpha}c_j^\beta=\delta_{\alpha,\beta}\,.
	\end{equation}
	We prove the following statement
	\begin{prop}
		The eigenstates  $\ket{\Psi_M}$ can be prepared by a circuit (with LOCC) of depth $O(NM)$ and a single ancilla per site.
	\end{prop}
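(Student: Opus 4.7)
First I would verify the identity $\ket{\Psi_M} \propto W_M W_{M-1}\cdots W_1\ket{0}^{\otimes N}$ asserted in the main text. Setting $B_j = A_j + A_j^\dagger$, the canonical anticommutation relations~\eqref{eq:anticommutations} give $B_j^2 = \{A_j,A_j^\dagger\}=\openone$, so $W_j = e^{i(\pi/2)B_j} = \cos(\pi/2)\openone + i\sin(\pi/2)B_j = iB_j$. The identity then follows by induction on $M$. The base case is $W_1\ket{0}^{\otimes N} = iA_1^\dagger\ket{0}^{\otimes N}$, since $A_1$ annihilates the vacuum. For the inductive step, assuming $W_{M-1}\cdots W_1\ket{0}^{\otimes N} \propto A_{M-1}^\dagger\cdots A_1^\dagger\ket{0}^{\otimes N}$, applying $W_M = i(A_M + A_M^\dagger)$ produces two terms: the $A_M$ piece anticommutes past each $A_\alpha^\dagger$ with $\alpha < M$ (collecting an overall sign) via $\{A_M,A_\alpha^\dagger\} = 0$, and finally hits $A_M\ket{0}^{\otimes N} = 0$, so only the $iA_M^\dagger$ contribution survives, yielding $\ket{\Psi_M}$ up to a phase.

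Second, I would show each $W_j = iB_j$ can be realized with depth $O(N)$ and a single ancilla per site. In the fermionic picture $B_j = \sum_k [c_k^j a_k + \overline{c_k^j} a_k^\dagger]$ is linear in the mode operators. Because the coefficient matrix $C = (c_k^\alpha)$ is unitary (a consequence of \eqref{eq:anticommutations}), there exists a particle-number-conserving Bogoliubov unitary $\mathcal{U}_j$ with $\mathcal{U}_j^\dagger a_1 \mathcal{U}_j = A_j$, so that $\mathcal{U}_j B_j \mathcal{U}_j^\dagger = a_1 + a_1^\dagger = \sigma_1^x$ in the Jordan--Wigner representation. A standard Givens-rotation decomposition writes $\mathcal{U}_j$ as a brickwork of $O(N^2)$ nearest-neighbor XY-type matchgates arranged in $O(N)$ depth~\cite{kivlichan2018quantum}. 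Therefore $W_j = \mathcal{U}_j^\dagger (i\sigma_1^x) \mathcal{U}_j$ is implemented in depth $O(N)$, and the full preparation $W_M\cdots W_1\ket{0}^{\otimes N}$ takes depth $O(NM)$. LOCC is not strictly required by this matchgate construction, but for a more standard (non-matchgate) gate set the non-Clifford unitary $V$ of Result~\ref{lem_v} can serve as a drop-in replacement whenever the Jordan--Wigner strings need to be applied, since $V$ realizes precisely the controlled tensor-product structure of a JW string in constant depth.

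The main obstacle I anticipate is keeping the ancilla budget at exactly one per site. The matchgate circuit above uses no ancillas at all, so the difficulty arises only if $B_j$ is instead assembled from the $V$ gadgets of Result~\ref{lem_v}, which consume auxiliary registers. I would handle this by carefully uncomputing all auxiliary ancillas at the end of each $W_j$, resetting them to $\ket{0}$ before the next step begins; this preserves the one-ancilla-per-site budget at the price of at most a constant factor overhead in depth per $W_j$, so the overall $O(NM)$ scaling is unaffected. A secondary subtlety is that the 1D locality constraint forces depth $O(N)$ (rather than $O(\log N)$) for the Bogoliubov layer, but this is a genuine feature of the geometry rather than an artifact of the construction.
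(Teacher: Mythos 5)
Your proof is correct, but the second half takes a genuinely different route from the paper. The first step (showing $\ket{\Psi_M}=W_M\cdots W_1\ket{0}^{\otimes N}$ from $B_j^2=\openone$ and $A_M\ket{\Psi_{M-1}}=0$) coincides with the paper's argument. For the circuit implementation of each $W_j$, however, you conjugate $B_j$ by a number-conserving Gaussian unitary $\mathcal{U}_j$ bringing $A_j$ to $a_1$, and realize $\mathcal{U}_j$ as an $O(N)$-depth brickwork of nearest-neighbor Givens rotations à la Ref.~\cite{kivlichan2018quantum}; this is purely unitary, needs no ancillas and no LOCC, and does establish the $O(NM)$ depth (the "one ancilla per site" in the statement is only an upper bound on resources, so using none is fine). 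The paper instead decomposes $e^{i\theta(A_M+A_M^\dagger)}$ via the iterated Euler-angle identity~\eqref{eq:aux} into a palindromic product of $2N-1$ single-site rotations $R_j=V_jX_jV_j^\dagger$, where each Jordan--Wigner string unitary $V_j$ is of the form~\eqref{eq:unitary} and is implemented in constant depth with LOCC and one ancilla per site via Result~\ref{lem_v}. What your approach buys is simplicity and reliance on standard, ancilla-free matchgate circuits; indeed, as the paper itself notes, the Givens-rotation machinery can prepare the entire Slater determinant in total depth $O(N)$, so your route could even be strengthened beyond $O(NM)$. What the paper's approach buys is that it exercises the new non-Clifford LOCC primitive and applies the creation operators $A_\alpha^\dagger$ one at a time as black boxes, which is what lets the authors claim extensions to non-Gaussian reference states $\ket{\phi_0}$ where a global Givens decomposition is unavailable. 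One cosmetic remark: $\mathcal{U}_j$ is an orbital rotation rather than a genuine Bogoliubov transformation (it does not mix $a$ with $a^\dagger$), which is precisely why it maps to string-free nearest-neighbor gates under Jordan--Wigner.
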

	\begin{proof}
		First, note that
		\be 
		|\Psi_M\rangle = e^{i\frac{\pi}{2} (A_M+ A_M^\dagger)} |\Psi_{M-1}\rangle\,.
		\ee
		This is because $(A_M+A_M^\dagger)^2=1$ and also $A_M|\Psi_{M-1}\rangle=0$. The last equality follows from the anticommutation relations~\eqref{eq:anticommutations} and the fact that $A_M\neq A_j$ for $j<M$. 
		
		Next, let $A$ and $B$ two anticommuting operators, such that $\{A,A^\dagger\}=\{B,B^\dagger\}=1$. Then, for $\alpha,\theta\in \mathbb{R}$, we have the identity
		\begin{eqnarray}
			\label{eq:aux}
			e^{i \alpha [\cos(\theta)(A+A^\dagger)+\sin(\theta) (B+B^\dagger)]} =e^{i \beta (B+B^\dagger)} e^{i \gamma (A+A^\dagger)}e^{i \beta (B+B^\dagger)}\,,
		\end{eqnarray}
		where 
		\begin{align}
			\cos(2\beta)\cos(\gamma)=\cos(\alpha)\,,\\
			\sin(2\beta)\cos(\gamma)=\sin(\alpha)\sin(\theta)\,,\\
			\sin(\gamma)=\sin(\alpha)\cos(\theta)\,,
		\end{align}
		which can be simply derived expanding the exponential functions. Note that the third equation can be derived from the first two, so that there is always a solution to the above system. By applying iteratively this relation, we obtain
		\be 
		e^{i\theta (A_M+ A_M^\dagger)} = R_N \ldots R_2 R_1 R_2 \ldots R_N,
		\ee
		where $R_j = e^{i\theta_j (c_j^M a_j^\dagger + {\rm h.c.}) }$, where $\theta_j$ can be easily computed through iteration of \eqref{eq:aux}. Note that the coefficients $c_j^M$ are in general complex, but Eq.~\eqref{eq:aux} can be applied as the phases are reabsorbed in the definition of the operators $A$ and $B$. 
		
		Finally, we notice that 
		\be
		R_j = V_j X_j V^\dagger_j\,,
		\ee
		where, $X_j=e^{i\theta_j (c_j^M \sigma_j^+ +{\rm h.c.})}$ and
		\be
		\label{eq:unitary_JW}
		V_j = |0\rangle_j\langle 0| \otimes \left(\otimes_{k=1}^{j-1} U_{0,k}\right)+ |1\rangle_j\langle 1| \otimes \left(\otimes_{k=1}^{j-1} U_{1,k}\right)\,,
		\ee
		with $U_{0,k}=1$ and $U_{1,k}=\sigma_z$. It is easy to see that it performs the Jordan Winger transformation, \emph{i.e.} 
		\begin{equation}
			V_j \sigma_j^+ V^\dagger_j =\sigma_{1}^z\otimes \cdots \otimes \sigma_{j-1}^z\otimes \sigma_j^+\,.
		\end{equation}
		Note that Eq.~\eqref{eq:unitary_JW} is of the form~\eqref{eq:unitary}, where the control qubit is at site $j$, and can thus be implemented with a circuit of depth $D=O(1)$ using LOCC. 
		
		Finally, using
		\begin{align}
			X_N\ldots X_{j+1} V_{j}&=  V_{j} X_N\ldots X_{j+1}\,,\\
			V_{j} X_{j+1} \ldots X_N &=  X_{j+1} \ldots X_N  V_{j} \,,
		\end{align}
		and that $V_j=V^\dagger_j$, $[V_j,V_k]=0$, we obtain
		\begin{equation}
			e^{i\theta (A_M+ A_M^\dagger)} = T L_N L_{N-1} \ldots  L_{1} \tilde L_2 \ldots, \tilde L_N T,\\
		\end{equation}
		where $T =  \left[\prod_{j=1}^N V_j\right]$, and
		\begin{align}
			L_N&= X_N, \quad L_{N-1}=V_N X_{N-1},\ldots, L_{2}=V_3 X_{2}, \quad 	L_1=V_2 X_1 V_2, \\
			\tilde L_N &= X_N, \quad \tilde L_{N-1}=X_{N-1}V_N ,\ldots, \tilde L_{2}= X_{2} V_3 \,.
		\end{align}
		Therefore, putting together all the excitations, and denoting by ${\cal L}_k = L_N L_{N-1} \ldots  L_{1} \tilde L_2 \ldots, \tilde L_N$ the operator corresponding  to the $k$-th excitation (\emph{i.e.}, depending on the parameters $c^k_j$), we have
		\be
		|\Psi_M\rangle = T {\cal L}_M \ldots {\cal L}_1 |0\ldots 0\rangle.
		\ee
		Since each $L_j$ and $V_j$ can be implemented by a circuit of depth $D=O(1)$, we immediately obtain the statement.
	\end{proof}
	
\end{document}